 \tikzstyle{every picture}=[baseline=-0.25em,shorten <=-0.1pt]
\tikzstyle{dotpic}=[scale=0.5]
\tikzstyle{dotpicsm}=[scale=0.3]
\tikzstyle{dotpic2}=[scale=0.75]
\newcommand{\phantombox}[1]{\tikz[baseline=(current bounding box).east]{\path [use as bounding box] (0,0) rectangle #1;}}
\tikzstyle{braceedge}=[decorate,decoration={brace,amplitude=1mm,raise=-1mm}]
\tikzstyle{left hook arrow}=[left hook-latex]
\tikzstyle{right hook arrow}=[right hook-latex]
\tikzstyle{left label}=[label,anchor=east,xshift=1.5mm]
\tikzstyle{right label}=[label,anchor=west,xshift=-1.5mm]
\newcommand{\boxshape}[3]{%
\pgfdeclareshape{#1}{
\inheritsavedanchors[from=rectangle] 
\inheritanchorborder[from=rectangle]
\inheritanchor[from=rectangle]{center}
\inheritanchor[from=rectangle]{north}
\inheritanchor[from=rectangle]{south}
\inheritanchor[from=rectangle]{west}
\inheritanchor[from=rectangle]{east}
\backgroundpath{
\southwest \pgf@xa=\pgf@x \pgf@ya=\pgf@y
\northeast \pgf@xb=\pgf@x \pgf@yb=\pgf@y

\@tempdima=#2
\@tempdimb=#3

\pgfpathmoveto{\pgfpoint{\pgf@xa - 5pt + \@tempdima}{\pgf@ya}}
\pgfpathlineto{\pgfpoint{\pgf@xa - 5pt - \@tempdima}{\pgf@yb}}
\pgfpathlineto{\pgfpoint{\pgf@xb + 5pt + \@tempdimb}{\pgf@yb}}
\pgfpathlineto{\pgfpoint{\pgf@xb + 5pt - \@tempdimb}{\pgf@ya}}
\pgfpathlineto{\pgfpoint{\pgf@xa - 5pt + \@tempdima}{\pgf@ya}}
\pgfpathclose
}
}}
\tikzstyle{cloud}=[shape=cloud,draw,minimum width=1.5cm,minimum height=1.5cm]
\tikzstyle{map}=[draw,shape=NEbox,inner sep=2pt,minimum height=6mm,fill=white]
\tikzstyle{dashedmap}=[draw,dashed,shape=NEbox,inner sep=2pt,minimum height=6mm,fill=white]
\tikzstyle{mapdag}=[draw,shape=SEbox,inner sep=2pt,minimum height=6mm,fill=white]
\tikzstyle{mapadj}=[draw,shape=SEbox,inner sep=2pt,minimum height=6mm,fill=white]
\tikzstyle{maptrans}=[draw,shape=SWbox,inner sep=2pt,minimum height=6mm,fill=white]
\tikzstyle{mapconj}=[draw,shape=NWbox,inner sep=2pt,minimum height=6mm,fill=white]
\tikzstyle{dot}=[inner sep=0.7mm,minimum width=0pt,minimum height=0pt,fill=black,draw,shape=circle]
\tikzstyle{black dot}=[dot]
\tikzstyle{white dot}=[dot,fill=white]
\tikzstyle{gray dot}=[dot,fill=gray!40!white]
\tikzstyle{alt white dot}=[white dot,label={[xshift=3mm,yshift=-0.05mm,font=\tiny]left:$*$}]
\tikzstyle{alt gray dot}=[gray dot,label={[xshift=3mm,yshift=-0.05mm,font=\tiny]left:$*$}]
\tikzstyle{white norm}=[rectangle,fill=white,draw=black,minimum height=2mm,minimum width=2mm,inner sep=0pt,font=\small]
\tikzstyle{gray norm}=[white norm,fill=gray!40!white]
\tikzstyle{black norm}=[white norm,fill=black]
\tikzstyle{arrs}=[-latex,font=\small,auto]
\tikzstyle{arrow plain}=[arrs]
\tikzstyle{arrow dashed}=[dashed,arrs]
\tikzstyle{arrow bold}=[very thick,arrs]
\tikzstyle{arrow hide}=[draw=white!0,-]
\tikzstyle{arrow reverse}=[latex-]
\tikzstyle{cdnode}=[]
\tikzstyle{wide point}=[fill=white,draw=black,shape=isosceles triangle,shape border rotate=90,isosceles triangle stretches=true,inner sep=1pt,minimum width=1.5cm,minimum height=5mm]
\tikzstyle{wide copoint}=[fill=white,draw=black,shape=isosceles triangle,shape border rotate=-90,isosceles triangle stretches=true,inner sep=1pt,minimum width=1.5cm,minimum height=4mm]
\tikzstyle{very wide copoint}=[fill=white,draw=black,shape=isosceles triangle,Shape border rotate=-90,isosceles triangle stretches=true,inner sep=1pt,minimum width=2.5cm,minimum height=4mm]
\tikzstyle{very wide empty copoint}=[draw=black,shape=isosceles triangle,shape border rotate=-90,isosceles triangle stretches=true,inner sep=1pt,minimum width=2.5cm,minimum height=4mm]
\tikzstyle{symm}=[ultra thick,shorten <=-1mm,shorten >=-1mm]
\tikzstyle{square box}=[rectangle,fill=white,draw=black,minimum height=5mm,minimum width=5mm,font=\small]
\tikzstyle{square gray box}=[rectangle,fill=gray!30,draw=black,minimum height=6mm,minimum width=6mm]
\tikzstyle{point}=[regular polygon,regular polygon rotate=180,regular polygon sides=3,draw=black,scale=0.75,inner sep=-0.5pt,minimum width=.6cm,fill=white]
\tikzstyle{copoint}=[regular polygon,regular polygon sides=3,draw=black,scale=0.75,inner sep=-0.5pt,minimum width=1cm,fill=white]
\tikzstyle{gray point}=[point,fill=gray!40!white]
\tikzstyle{gray copoint}=[copoint,fill=gray!40!white]
\tikzstyle{diredge}=[->]
\tikzstyle{rdiredge}=[<-]
\tikzstyle{dashed edge}=[dashed]
\tikzstyle{cross}=[preaction={draw=white, -, line width=3pt}]
\tikzstyle{black point}=[regular polygon,regular polygon rotate=180,regular polygon sides=3,draw=black,scale=0.25,inner sep=-0.5pt,minimum width=1cm,fill=black]
\tikzstyle{grey point}=[regular polygon,regular polygon rotate=180,regular polygon sides=3,draw=black,scale=0.25,inner sep=-0.5pt,minimum width=1cm,fill=gray!40!white]
\tikzstyle{white point}=[regular polygon,regular polygon rotate=180,regular polygon sides=3,draw=black,scale=0.25,inner sep=-0.5pt,minimum width=1cm,fill=white]
\tikzstyle{wide white point}=[regular polygon,regular polygon rotate=180,regular polygon sides=3,draw=black,xscale=0.25,yscale=0.15,inner sep=-0.5pt,minimum width=15mm,fill=white]
\tikzstyle{wide white copoint}=[regular polygon,regular polygon rotate=0,regular polygon sides=3,draw=black,xscale=0.25,yscale=0.15,inner sep=-0.5pt,minimum width=15mm,fill=white]
\tikzstyle{black copoint}=[regular polygon,regular polygon rotate=0,regular polygon sides=3,draw=black,scale=0.25,inner sep=-0.5pt,minimum width=1cm,fill=black]
\tikzstyle{grey copoint}=[regular polygon,regular polygon rotate=0,regular polygon sides=3,draw=black,scale=0.25,inner sep=-0.5pt,minimum width=1cm,fill=gray!40!white]
\tikzstyle{white copoint}=[regular polygon,regular polygon rotate=0,regular polygon sides=3,draw=black,scale=0.25,inner sep=-0.5pt,minimum width=1cm,fill=white]
\newcommand{\subsystemcolour}{gray!40!white}
\tikzstyle{subsystem}=[postaction={decorate,decoration={markings,
\tikzstyle{None}=[circle, fill=white, inner sep=0pt]
\newcommand{\pantsalg}{%
\,\begin{tikzpicture}[dotpic,scale=0.8]
		\node [style=none] (0) at (-0.75, -0.5) {};
		\node [style=none] (1) at (0.25, 0.5) {};
		\node [style=none] (2) at (0.75, -0.5) {};
		\node [style=none] (3) at (0.25, -0.5) {};
		\node [style=none] (4) at (-0.25, 0.5) {};
		\node [style=none] (5) at (-0.25, -0.5) {};
		\draw [style=diredge, in=-90, out=90] (2.center) to (1.center);
		\draw [style=diredge, in=90, out=-90] (4.center) to (0.center);
		\draw [style=diredge, in=90, out=90, looseness=2.00] (5.center) to (3.center);
\end{tikzpicture}\,}
\newcommand{\dotcounit}[1]{%
\,\begin{tikzpicture}[dotpic,yshift=-1mm]
\node [#1] (a) at (0,0.25) {}; 
\draw [diredge] (0,-0.2)--(a);
\end{tikzpicture}\,}
\newcommand{\dotunit}[1]{%
\,\begin{tikzpicture}[dotpic,yshift=1.5mm]
\node [#1] (a) at (0,-0.25) {}; 
\draw [diredge] (a)--(0,0.2);
\end{tikzpicture}\,}
\newcommand{\dotcomult}[1]{%
\,\begin{tikzpicture}[dotpic,yshift=0.5mm]
	\node [#1] (a) {};
	\draw [diredge] (-90:0.55)--(a);
	\draw [diredge] (a) -- (45:0.6);
	\draw [diredge] (a) -- (135:0.6);
\end{tikzpicture}\,}
\newcommand{\dotmult}[1]{%
\,\begin{tikzpicture}[dotpic]
	\node [#1] (a) {};
	\draw [diredge] (a) -- (90:0.55);
	\draw [diredge] (a) (-45:0.6) -- (a);
	\draw [diredge] (a) (-135:0.6) -- (a);
\end{tikzpicture}\,}
\newcommand{\idpic}{%
\,\begin{tikzpicture}[dotpic]
  \draw[diredge] (0,-.5) to (0,.35);
\end{tikzpicture}\,}
\newcommand{\dotaction}[1]{%
\,\begin{tikzpicture}[dotpic,yshift=0.5mm]
	\node [#1] (a) {};
	\draw [diredge] (-90:0.55)--(a);
	\draw [diredge] (a) -- (45:0.6);
	\draw [rdiredge] (a) -- (135:0.6);
\end{tikzpicture}\,}
\newcommand{\dotcoaction}[1]{%
\,\begin{tikzpicture}[dotpic]
	\node [#1] (a) {};
	\draw [diredge] (a) -- (90:0.55);
	\draw [diredge] (a) (-45:0.6) -- (a);
	\draw [rdiredge] (a) (-135:0.6) -- (a);
\end{tikzpicture}\,}
\newcommand{\dotdualmult}[1]{%
\!\begin{tikzpicture}[dotpic]
		\node [style=white dot] (0) at (0, 0.3) {};
		\node [style=none] (1) at (-0.5, -0.4) {};
		\node [style=none] (2) at (0.5, -0.4) {};
		\node [style=none] (3) at (0, 0.8) {};
		\draw [style=diredge] (3.center) to (0);
		\draw [style=diredge, in=15, out=-30, looseness=1.50] (0) to (1.center);
		\draw [style=diredge, in=165, out=-150, looseness=1.50] (0) to (2.center);
\end{tikzpicture}\!}
\newcommand{\dotidualiser}[1]{%
\begin{tikzpicture}[dotpic,yshift=1.5mm]
	\node [#1] (a) {};
	\draw [diredge] (a) to (-90:0.5);
	\draw [diredge] (a) to (90:0.5);
\end{tikzpicture}}
\newcommand{\dotonly}[1]{%
\,\begin{tikzpicture}[dotpic,yshift=-0.3mm]
\node [#1] (a) at (0,0) {};
\end{tikzpicture}\,}
\newcommand{\dotnorm}[1]{%
\,\begin{tikzpicture}[dotpic,yshift=0.4mm]
		\node [style=none] (0) at (0, -0.4) {};
		\node [style=white norm] (1) at (0, -0) {};
		\node [style=none] (2) at (0, 0.5) {};
		\draw (0.center) to (1);
		\draw [style=diredge] (1) to (2.center);
\end{tikzpicture}\,}
\newcommand{\dotconorm}[1]{%
\,\begin{tikzpicture}[dotpic,yshift=0.4mm]
		\node [style=none] (0) at (0, -0.4) {};
		\node [style=white norm] (1) at (0, 0.1) {};
		\node [style=none] (2) at (0, 0.5) {};
		\draw [style=diredge] (1) to (0.center);
		\draw (2.center) to (1);
\end{tikzpicture}\,}
\newcommand{\blackunit}{\dotunit{dot}}
\newcommand{\blackmult}{\dotmult{dot}}
\newcommand{\whitedot}{\dotonly{white dot}}
\newcommand{\whiteunit}{\dotunit{white dot}}
\newcommand{\whitecounit}{\dotcounit{white dot}}
\newcommand{\whitemult}{\dotmult{white dot}}
\newcommand{\whitecomult}{\dotcomult{white dot}}
\newcommand{\whiteaction}{\dotaction{white dot}}
\newcommand{\whitecoaction}{\dotcoaction{white dot}}
\newcommand{\grayunit}{\dotunit{gray dot}}
\newcommand{\graymult}{\dotmult{gray dot}}
\newcommand{\subsystem}[3][gray!40!white]{\smash{\begin{aligned}\begin{tikzpicture}[dotpic,font=\scriptsize]
                \node at (.3,-.3) {$#2$};
                \node at (.3,.4) {$#3$};
		\draw [style=subsystem, subsystemcolour=#1] (0,-.5) to (0,.2);
                \draw (0,.2) to (0,.5);
\end{tikzpicture}\end{aligned}}}
\newcommand{\subsystemdag}[3][gray!40!white]{\smash{\begin{aligned}\begin{tikzpicture}[dotpic,font=\scriptsize,yscale=-1]
                \node at (.3,-.3) {$#2$};
                \node at (.3,.4) {$#3$};
		\draw [style=subsystem, subsystemcolour=#1] (0,-.5) to (0,.2);
                \draw (0,.2) to (0,.5);
\end{tikzpicture}\end{aligned}}}
\let\olddagger\dagger
\renewcommand{\dagger}{\ensuremath{\olddagger}\xspace}
\newcommand{\op}{\ensuremath{\textrm{\rm op}}}
\newcommand{\cat}[1]{\ensuremath{\mathbf{#1}}\xspace}
\newcommand{\tr}{\ensuremath{\textrm{\rm tr}}}
\newcommand{\FHilb}{\cat{FHilb}}
\newcommand{\Rel}{\cat{Rel}}
\newcommand{\CP}{\ensuremath{\mathrm{CP}\xspace}}
\newcommand{\CPs}{\ensuremath{\CP^*}\xspace}
\newcommand{\V}{\cat{V}}
\newcommand{\id}[1][]{\ensuremath{1_{#1}}}
\DeclareMathOperator{\Mor}{Mor}
\DeclareMathOperator{\dom}{dom}
\DeclareMathOperator{\cod}{cod}
\newcommand{\cheat}{\ensuremath{\textit{\it cheat}}}
\newcommand{\bra}[1]{\ensuremath{\langle #1 |}}
\newcommand{\ket}[1]{\ensuremath{| #1 \rangle}}
\newcommand{\braket}[2]{\ensuremath{\langle #1 | #2 \rangle}}
\newcommand{\ketbra}[2]{\ensuremath{| #1 \rangle \! \langle #2 |}}
\newcommand\sqrtD{\textrm{\footnotesize $\sqrt{D}$}\xspace}
\newcommand\oneoversqrtD{\ensuremath{{\textstyle{1\over{\sqrt{D}}}}}\xspace}
\theoremstyle{plain}
\newtheorem{theorem}{Theorem}[section]
\newtheorem{lemma}[theorem]{Lemma}
\newtheorem{proposition}[theorem]{Proposition}
\theoremstyle{definition}
\newtheorem{definition}[theorem]{Definition}
\newtheorem{example}[theorem]{Example}
\newtheorem{remark}[theorem]{Remark}
\tikzstyle{cdiag}=[matrix of math nodes, row sep=3em, column sep=3em, text height=1.5ex, text depth=0.25ex,inner sep=0.5em]
\tikzstyle{arrow above}=[transform canvas={yshift=0.5ex}]
\tikzstyle{arrow below}=[transform canvas={yshift=-0.5ex}]
\newcommand{\vkcleararrows}{%
\tikzstyle{vkarrow1}=[arrow plain]
\tikzstyle{vkarrow2}=[arrow plain]
\tikzstyle{vkarrow3}=[arrow plain]
\tikzstyle{vkarrow4}=[arrow plain]
\tikzstyle{vkarrow5}=[arrow plain]
\tikzstyle{vkarrow6}=[arrow plain]
\tikzstyle{vkarrow7}=[arrow plain]
\tikzstyle{vkarrow8}=[arrow plain]
\tikzstyle{vkarrow9}=[arrow plain]
\tikzstyle{vkarrow10}=[arrow plain]
\tikzstyle{vkarrow11}=[arrow plain]
\tikzstyle{vkarrow12}=[arrow plain]}
\def\bR{\begin{color}{red}} 
\def\bB{\begin{color}{blue}}
\def\bM{\begin{color}{magenta}}
\def\bC{\begin{color}{cyan}}
\def\bW{\begin{color}{white}}
\def\bBl{\begin{color}{black}} 
\def\bG{\begin{color}{green}}
\def\bY{\begin{color}{yellow}}
\def\e{\end{color}\xspace}
\begin{document}

\title{The CBH characterisation theorem\\beyond algebraic quantum theory}
\author[ch]{Chris Heunen\fnref{fn1}}
\ead{chris.heunen@ed.ac.uk}
\author[ak]{Aleks Kissinger}
\ead{aleks.kissinger@cs.ox.ac.uk}
\address[ch]{University of Edinburgh, UK}
\address[ak]{University of Oxford, UK}
\fntext[fn1]{Supported by EPSRC Fellowship EP/L002388/1.}

\begin{abstract}
  The CBH theorem characterises quantum theory within a C*-algebraic framework.
  Namely, mathematical properties of C*-algebras modelling quantum systems are equivalent to constraints that are information-theoretic in nature:
  (1) noncommutativity of subalgebras is equivalent to impossibility of signalling;
  (2) noncommutativity of the whole algebra is equivalent to impossibility of broadcasting; 
  (3) the existence of entangled states is implied by the impossibility of secure bit commitment (with the converse conjectured). 
  However, the C*-algebraic framework has drawn criticism as it already contains much of the mathematical structure of quantum theory such as complex linearity. 
  We address this issue by a generalising C*-algebras categorically. 
  In this framework, equivalence (1) holds, equivalence (2) becomes a strict implication, and implication (3) fails in general.
  Thus we identify exactly what work is being done by the complex-linear structure of  C*-algebras.
  In doing so, we uncover a richer hierarchy of notions of `classicality' and `quantumness' of information than visible in the concrete case.
\end{abstract}
\maketitle

\section{Introduction}

Does information play a significant role in the foundations of physics? This question, often abbreviated `it from bit' after John Wheeler, has received significant attention, and lies at the root of quantum information theory. The seminal work by Clifton, Bub, and Halvorson~\cite{cliftonbubhalvorson} isolates quantum theory according to the following three information-theoretic constraints:
\begin{itemize}
\item It is impossible to \emph{signal} information faster than light;
\item It is impossible to \emph{broadcast} an unknown state;
\item It is impossible to securely implement \emph{bit commitment};
\end{itemize}
by linking each of them, respectively, with the following algebraic conditions, characteristic of quantum theory:
\begin{itemize}
\item Distinct systems are \emph{kinematically independent};
\item There exist \emph{noncommuting} observables;
\item There exist \emph{entangled}, or \emph{nonlocal}, states.
\end{itemize}
The first two pairs of properties are proven equivalent; for the third only one implication is proven, and the other conjectured.

However, a criticism often raised against this result is that a \mbox{C*-algebraic} framework, including complex numbers and linearity, is assumed from the start~\cite{spekkens:toytheory,barrett:gpt,barnumetal:nobroadcasting}. In the words of one of the authors himself~\cite[page~204]{schlosshauer:enigma}:
\begin{quote}
  The characterization theorem we proved assumes a C*-algebraic framework for physical theories, which I would now regard as not sufficiently general in the relevant sense, even though it includes a broad class of classical and quantum theories, including field theories, and hybrid theories with superselection rules.
\end{quote}

Recent axiomatisations \cite{hardy:reconstruction,masanes2011derivation,masanes2013existence,hoehn2015quantum,Chiri2} have to a great extent avoided this problem by starting with a framework which retains only the convex structure of probabilistic states. Nevertheless, these frameworks maintain some remnant of linearity in the form of convexity, which does a great deal of work.

In this paper, we investigate the issue by generalising quantum theory in a different direction, which instead retains only the algebraic structure of interaction between classical and quantum systems, via the \emph{CP*-construction}~\cite{coeckeheunenkissinger:cpstar}, which enables us to consider abstract C*-algebras in a wide variety of categories beyond the usual examples based on vector spaces. By applying this construction to the category of Hilbert spaces, we obtain algebraic quantum information theory in the usual sense, where objects are (finite-dimensional) C*-algebras and morphisms are completely positive linear maps. However, if we apply this to other categories, we obtain nonstandard models of quantum theory. For example, applying the construction to the category of relations gives a \textit{possibilistic} model of quantum theory, where probabilities of measurement outcomes are replaced by boolean-valued possibilities (`possible' vs.\ `impossible'), in a similar spirit to \textit{possibilistic physics} as described by Fritz~\cite{fritz2009possibilistic} or Schumacher and Westmoreland's \textit{modal quantum theory}~\cite{schumacher2012modal}.


We phrase the above implications in categorical terms, and show which of them survive and which ones fail. That is, our main contribution is to prove the following (non)implications in this generalised setting:

\begin{align*}
  \textit{information theory} & \hspace*{7mm} \textit{quantum theory} \\
  \text{no signalling} & \;\Leftrightarrow\;
  \text{kinematic independence} \\[-1mm]
  \text{no broadcasting} &
  \begin{aligned}\begin{array}{c}
    \Rightarrow \\[-1.25ex]
    \nLeftarrow
  \end{array}\end{aligned}
  \text{noncommutativity} \\[-1mm]
  \text{no bit commitment} &
  \begin{aligned}\begin{array}{c}
    \nLeftarrow
  \end{array}\end{aligned}
  \text{entanglement}
\end{align*}

The first equivalence of no-signalling with kinematic independence indeed lifts almost unmodified to any category arising from the CP*-construction. However, in the second case, the commutativity of the algebra of observables merely implies the existence of a broadcasting map (and not vice-versa). Interestingly, this yields finer-grained \emph{notions of classicality} for the systems in a theory. In particular, we can combine this with the result of~\cite{CompQL} to show that, in the case of possibilistic quantum theory, this yields a beautiful group-theoretic hierarchy of classical systems. Rather than requiring the existence of a broadcasting map, we can force a system to be classical by requiring that the algebra associated with the system is commutative or that the partially ordered set of \emph{propositions} (\textit{i.e.}\ the CP*-generalisation of the lattice of projections) is distributive, or both. In the case of quantum theory, either of these is equivalent to the existence of a broadcasting map. However, in possibilistic quantum theory, the following inclusions are strict:
\[ \textit{distributive \& commutative} \subsetneq \textit{commutative} \subsetneq \textit{broadcasting} \]
The systems in possibilistic quantum theory are groupoids~\cite{heunencontrerascattaneo:groupoids}, which are a generalisation of groups where the multiplication is allowed to be partially-defined. We show that broadcasting, the weakest `notion of classicality', implies that these groupoids are in fact disjoint unions of groups. Then, stricter notions of classicality yield smaller classes of groups. The `classicality' hierarchy above corresponds exactly to groupoids arising as disjoint unions of:
\[ \textit{locally cyclic groups} \subsetneq \textit{abelian groups} \subsetneq \textit{groups} \]

For the case of bit commitment, we first assume that our system is `quantum enough' to admit secure bit commitment. Namely, we assume that it is described by a noncommutative algebra, which by the second implication is strictly weaker than assuming it is no-broadcasting. We then exhibit a noncommutative system in $\CPs[\Rel]$ which simultaneously admits entanglement and a secure bit commitment protocol. The converse remains an open question.

A notable feature of this work is the convergence of two previously disconnected threads in the foundational study of quantum information. On the one hand, we use many abstract, diagrammatic techniques that originated within the program of \textit{categorical quantum mechanics}~\cite{abramskycoecke:categoricalsemantics,heunenvicary:cqt,PQP} (CQM). Yet in contrast to prior CQM work, which relies heavily on the tensor product to give a notion of `subsystem, we develop abstract, categorical analogues to subsystems which are more general than tensor factors, and develop concepts such as signalling and broadcasting between such subsystems in the style of \textit{algebraic quantum information}~\cite{keyl2002fundamentals} (AQI). We hope this methodology, which both enables AQI-style generalised systems and CQM-style diagrammatic manipulations, will find many new applications in areas such as algebraic quantum field theory.

\begin{remark}
  Note that we refer to the condition called `nonlocality' in  \cite{cliftonbubhalvorson} simply as `entanglement'. This is to avoid a terminology clash with the related (and now more common) use of the term nonlocality, namely the absence of a locally realistic model for observed correlations.
\end{remark}

The rest of this article is laid out as follows.
Section~\ref{sec:cpstar} sets up our general categorical framework.
Sections~\ref{sec:signalling}--\ref{sec:commitment} then investigate one equivalence each: Section~\ref{sec:signalling} signalling, Section~\ref{sec:broadcasting} broadcasting, and Section~\ref{sec:commitment} bit commitment.
We thank Katriel Cohn--Gordon and Mariami Gachechiladze for their MSc thesis work, supervised by Chris Heunen, on parts of this topic~\cite{cohngordon:bitcommitment,gachechiladze:heisenberg}.

\section{The CP*-construction}\label{sec:cpstar}

In this section, we will briefly introduce compact dagger categories and the CP*-construction. We assume familiarity with basic categorical concepts, notably the notion of a \emph{symmetric monoidal category} (see e.g.~\cite{maclane:categories}).

The \emph{CP*-construction} transforms one symmetric monoidal category into another in a way that mirrors the passage from finite-dimensional Hilbert spaces and linear maps to finite-dimensional C*-algebras and completely positive linear maps. It operates on a certain type of symmetric monoidal category called a \emph{compact dagger category}, whose definition we now recall.

\begin{definition}\label{def:compact-closed}
  A \emph{compact closed category} is a symmetric monoidal category $\cat{C}$ such that every object $A$ in $\cat{C}$ has a \emph{dual} object $A^*$ and morphisms
  \begin{align}
  \varepsilon \colon A \otimes A^* & \to I
  & \eta & \colon I \to A^* \otimes A \notag
  \\ \intertext{satisfying}
  (\varepsilon \otimes \id) \circ (\id \otimes \eta) & = \id[A]
  & \id[A^*] & = (\id \otimes \varepsilon) \circ (\eta \otimes \id).
  \label{eq:compactstruc}
  \end{align}
\end{definition}

We will represent morphisms in such a category using string diagrams. For more information we refer to the survey~\cite{SelingerSurvey}. 
Objects are depicted as labelled wires with upward directed arrows, and their duals as downward wires:
\ctikzfig{directed-wires}
The morphisms $\varepsilon$ and $\eta$ are called \emph{caps} and \emph{cups}, drawn as:
\ctikzfig{directed_cap_cup}
Using this notation, the equations from~\eqref{eq:compactstruc} become:
\ctikzfig{line_yank_with_dual}
These `snake equations' embody the relationship between a maximally entangled state (e.g. the Bell state) and the associated effect arising from a Bell measurement. As such, they represent the key to quantum teleportation~\cite{abramskycoecke:categoricalsemantics}.

We can furthermore define the (partial) trace of a morphism $f : X \otimes A \to X \otimes B$ in a compact closed category via cups and caps:
\[ \tr_X(f) \ =\ \tikzfig{partial-trace} \]
where:
\ctikzfig{other-cap_cup}
(and similarly for cups).

An important property of compact closed categories is that two morphisms are equal whenever their string diagrams can be deformed into one another. In other words, the only relevant data in a string diagram is its connectivity~\cite{SelingerSurvey}.

\begin{definition}
A \emph{dagger category} is a category equipped with a contravariant functor $(-)^\dagger \colon \cat C^\op \to \cat C$ that satisfies $A^\dag=A$ on objects and $f^{\dag\dag}=f$ on morphisms.
\end{definition}

Dagger categories enable us to formulate abstract versions of many familiar linear-algebraic concepts. Notably, a morphism in a dagger category is called \textit{self-adjoint} if $f^\dag = f$ and \textit{positive} if there exists $g$ such that $f = g^\dag \circ g$. An isomorphism in a dagger category is \emph{unitary} if $f^{-1} = f^\dagger$.

\begin{definition}
A \emph{compact dagger category} is a category that is both a compact closed category and a dagger category, such that the coherence isomorphisms (associators, unitors, and swap maps) are unitary, and $\varepsilon_{A}^\dagger = \eta_{A^*}$.
\end{definition}

In a compact dagger category, a morphism can take four forms, the morphism itself and its \emph{adjoint}:
\[\tikzfig{dir_map} \qquad\qquad\qquad \tikzfig{dir_adj_map} \]
as well as its \emph{transpose} and \emph{conjugate}:
\[ \tikzfig{dir_trans_map1} =\ \tikzfig{dir_trans}
  \qquad\quad
   \tikzfig{dir_conj_map1} =\ \tikzfig{dir_conj}
\]

\begin{example}
  The category \FHilb of finite-dimensional Hilbert spaces and linear maps forms a compact dagger category. The dagger is given by the linear algebraic adjoint:
  \[ \braket{f^\dagger(u)}{v} = \braket{u}{f(v)} \]
  The dual $H^*$ is the linear algebraic dual, and cups and caps are given respectively as:
  \[ \varepsilon(\ket{\psi} \otimes \bra{\phi}) = \braket{\phi}{\psi} \qquad\qquad \eta = \sum \bra{\phi_i} \otimes \ket{\phi_i} \]
  where $\eta$ does not depend on the choice of orthonormal basis $\{ \ket{\phi_i} \}_i$.
  From this it follows that $f^* : K^* \to H^*$ is given by pre-composition:
  \[ f^*(\bra{\xi}) = \bra{\xi} \circ f \]
  which is sometimes called `operator transpose'. If we fix a bases for $K$ and $H$, we can identify them with their duals, in which case $f^*$ corresponds to matrix transposition.
\end{example}

\begin{example}
  The category \Rel of sets and relations also forms a compact dagger category. The composition of $R \subseteq A \times B$ and $S \subseteq B \times C$ is the usual composition of relations:
  \[ (a,c) \in S \circ R \iff \exists b \in B \colon (a,b) \in R, (b,c) \in  S \]
  and the monoidal product is given by cartesian product:
  \[ A \otimes B = A \times B \]
  \[ ((a,b), (c,d)) \in R \otimes S \iff (a,c) \in R, (b,d) \in S \]
  Hence the monoidal unit $I = \{*\}$ is a single-element set. The dagger is given by relational converse:
  \[ (b,a) \in R^\dagger \iff (a,b) \in R \]
  and cups and caps are:
  \[ \eta = \{ (*, (a,a)) \mid a \in A \} \quad\qquad
     \varepsilon = \{ ((a,a), *) \mid a \in A \} \]
\end{example}

The CP*-construction lets us build a new category whose objects are abstract C*-algebras coming from the original category and whose morphisms are abstract completely positive maps. To do this, we first make precise what we mean by `abstract C*-algebra'.

\begin{definition}
  A \emph{monoid} in a compact dagger category is an object $A$ together with a morphism $\whitemult \colon A \otimes A \to A$ and $\whiteunit \colon I \to A$ satisfying
  \[ \tikzfig{copycopyb} \qquad\qquad \tikzfig{copydelete2dag}. \]
  A \emph{dagger Frobenius structure} is a monoid satisfying
  \[ \tikzfig{frobenius}, \]
  where $\whitecomult = (\whitemult)^\dag$ and $\whitecounit = (\whiteunit)^\dag$.
  It is \emph{symmetric} when
  \begin{equation}\label{eq:symmetric}
    \tikzfig{symmetric}
  \end{equation}
  and \emph{special} when
  \[\tikzfig{copymatch}.\]
  We call a dagger Frobenius structure that is symmetric and special an \textit{abstract C*-algebra}.
\end{definition}


It was shown in \cite{Vicary2011} that dagger Frobenius structures in \FHilb correspond exactly to finite-dimensional C*-algebras, and in \cite{heunenkissingerselinger:idempotents}, it was furthermore shown that any such C*-algebra is isomorphic to one that is special and symmetric.

\begin{example}
  For a $D$-dimensional Hilbert space $H$, any orthonormal basis $\{\, \ket{\phi_i} \, \}_i$ defines a commutative abstract C*-algebra on $H$ via the \emph{Schur product}. This can be expressed in Dirac `bra-ket' notation as follows:
  \[
  \whitemult := \sum_i \ket{\phi_i} (\bra{\phi_i} \otimes \bra{\phi_i})
  \qquad\quad
  \whiteunit := \sum_i \ket{\phi_i}
  \]
  The comultiplication and counit are given as the adjoints of these two maps, namely:
  \[
  \whitecomult := \sum_i (\ket{\phi_i} \otimes \ket{\phi_i})\bra{\phi_i}
  \qquad\quad
  \whitecounit := \sum_i \bra{\phi_i}
  \]
\end{example}

\begin{example}
The algebra $\mathcal B(H)$ is almost an abstract C*-algebra, but it fails to satisfy the specialness equation by a scaling factor. However by normalising:
  \[
  \whitemult(M \otimes N) = \oneoversqrtD\, \cdot \, M N
  \qquad\quad
  \whiteunit = \sqrtD\, \cdot \, 1
  \]
  where $D$ is the dimension of $H$, we obtain an abstract C*-algebra $\mathcal A$ isomorphic to $\mathcal B(H)$. The comultiplication and counit are formed by taking the adjoint with respect to the Hilbert-Schmidt inner product on $\mathcal B(H)$, giving:
  \[
  \whitecomult(e_{ij}) = \oneoversqrtD\, \cdot \sum_k e_{ik} \otimes e_{kj}
  \qquad\quad
  \whitecounit(M) = \sqrtD\, \cdot \, \tr(M)
  \]
  where $e_{ij} = \ket{\phi_i}\bra{\phi_j}$ for any choice of orthonormal basis of $H$.
  Thus, symmetry~\eqref{eq:symmetric} captures cyclicity of the trace:
  \[ \tr(M N) = \tr(N M) \]
  For any abstract C*-algebra in \FHilb, we define a `star' on elements of the algebra via:
  \begin{equation}\label{eq:star}
    \tikzfig{star}
  \end{equation}
  Then, the C*-algebraic norm is fixed uniquely as the spectral radius by the spectral theorem. For more details see \cite{Vicary2011}.
\end{example}

While abstract C*-algebras correspond exactly to C*-algebras in \FHilb, they correspond to a familiar, but very different kind of structure in $\cat{Rel}$, namely \textit{groupoids}.

\begin{definition}
A \textit{groupoid} is a category whose morphisms are all isomorphisms.
\end{definition}

Groupoids generalise groups, in the sense that a group is the same thing as a groupoid with a single object. Groupoids are a useful tool for capturing symmetries or other geometric information may not be fully captured by a group. For example, in homotopy theory, one can define the fundamental groupoid of a topological space without fixing a basepoint~\cite{brown2006topology}, which is sometimes more convenient than the fundamental group, e.g. for disconnected spaces.

\begin{example}
  In the category $\cat{Rel}$ of sets and relations, special dagger Frobenius structures correspond on an object $A$ precisely to groupoids whose set of morphisms is $A$. The multiplication is the relation
  \[
    \whitemult = \{((g,f),g \circ f) \mid f,g \in A, g \circ f \emph{ is defined}\}
  \] 
  and the unit is the set of identities:
  \[
    \whiteunit = \{ (*,\id[X]) \mid X \text{ is an object in the groupoid} \} \subseteq I \times A
  \] 
  The comultiplication and counit are simply the relational converses of the multiplication and unit. Finally, just like for C*-algebras in $\cat{FHilb}$, there is a `star' operation on elements of an abstract C*-algebra in $\cat{Rel}$ given by~\eqref{eq:star}, which relates $g$ to its inverse $g^{-1}$.
  For more details see~\cite{heunencontrerascattaneo:groupoids}.
\end{example}

A useful calculational tool for symmetric Frobenius structures (and hence abstract C*-algebras) is the \emph{symmetric spider theorem}. It tells us which of a wide variety of diagrams involving the algebraic structure are equal. Call a diagram a \emph{tree} when its underlying (undirected) graph is a tree (\textit{i.e.}\ is connected and acyclic) and \emph{planar} if it contains no wire-crossings.

\begin{theorem}[Symmetric spider]
  For a symmetric dagger Frobenius structure $(A, \whitemult, \whiteunit)$, suppose $f$ and $g$ can be written as planar trees consisting only of $\whitemult$, $\whiteunit$, caps, cups, and $1_A$ with the same domain and codomain. Then $f = g$.
\end{theorem}

\begin{proof}
  See \cite{QuickThesis}.
\end{proof}

Since any planar tree with a given input/output type is equivalent, we can collapse them into a single node without ambiguity:
\begin{equation}\label{eq:symm-spider}
  \tikzfig{symm-spider}
\end{equation}
Such a node is called a \emph{spider}. This generalises the spider theorem for commutative Frobenius structures, which has been widely used in the categorical quantum mechanics literature~\cite{CD2,CQMII}. However, some care needs to be taken in the symmetric case, since the restriction to planarity means that the order of inputs and outputs is relevant. So, unlike for commutative Frobenius structures:
\ctikzfig{symm-neq}

Before showing the construction of a category whose objects are all abstract C*-algebras in $\cat C$, we first focus just on the abstract analogues to C*-algebras of the form $\mathcal B(H)$, for a finite-dimensional Hilbert space $H$, as was done in~\cite{SelingerCPM}. First, we note that morphisms of the form $\rho' \colon I \to H^* \otimes H$ are in 1-to-1 correspondence with morphisms $\rho\colon H \to H$. That is, there is a canonical bijection $\hom(H, H) \cong \hom(I, H^* \otimes H)$ given by `bending the wire', \textit{i.e.}\ pre-composing $\eta$:
\begin{equation}\label{eq:bend-the-wire}
  \tikzfig{bend-the-wire}
\end{equation}
whose inverse is given by post-composing by $\epsilon$
\begin{equation}\label{eq:unbend-the-wire}
  \tikzfig{unbend-the-wire}
\end{equation}

Sometimes the right-hand side of~\eqref{eq:bend-the-wire} is referred to as the \emph{name} of the morphism $\rho$. Using it, we can represent superoperators $\hom(H, H) \to \hom(K, K)$ as morphisms of type
\[ \Phi \colon H^* \otimes H \to K^* \otimes K\text{.} \]
Then, \cite{SelingerCPM} showed that, when $\cat C = \FHilb$, such morphisms correspond to CP-maps precisely when they factor as
\begin{equation}\label{eq:cpm-cond}
  \tikzfig{cpm-cond}
\end{equation}
for some $g$. One way to see that this indeed gives the correct notion of CP-map in \FHilb is to start with the presentation of an arbitrary CP-map in terms of the partial trace:
\[
\widetilde\Phi(\rho) \ =\  \tr_X(g^\dagger \rho g) \ =:\  \tikzfig{cpm-trace-form}
\]
and `bend the wire' as in \eqref{eq:bend-the-wire}:
\[ \tikzfig{cpm-trace-form} \ \ \mapsto\ \ \tikzfig{cpm-trace-bend} \]
Hence, $\Phi$ depicted in \eqref{eq:cpm-cond} sends the name of $\rho$ to the name of $\widetilde\Phi(\rho)$ for an arbitrary CP-map $\widetilde\Phi$.

To pass from CP-maps between $\mathcal B(H)$-type algebras to CP-maps between arbitrary abstract C*-algebras, we can embed $\hom(A, B)$ into $\hom(A^* \otimes A, B^* \otimes B)$ as follows:
\[ \tikzfig{dir_map-notypes}\ \ \mapsto\ \ \tikzfig{cps-cond-lhs} \]
We obtain `abstract CP-maps' by requiring that the image of this embedding satisfies~\eqref{eq:cpm-cond}. From this we get our main definition.

\begin{definition}
  For a compact dagger category $\cat{C}$, the category $\CPs[\cat{C}]$ has as objects special symmetric dagger Frobenius structures $(A, \whitemult)$, and as morphisms
  \[ (A, \whitemult) \to (B, \graymult) \]
  morphisms $f \colon A \to B$ from $\cat{C}$ satisfying the \emph{CP*-condition}, namely there exists a morphism $g \colon A \to X \otimes B$ in $\mathcal C$ satisfying
  \begin{equation}\label{eq:cps-cond}
    \tikzfig{cps-cond}.
  \end{equation}
\end{definition}

If $\cat{C}$ is a compact dagger category, then so is $\CPs[\cat{C}]$~\cite[Theorem~3.3]{coeckeheunenkissinger:cpstar}.

Our key examples are:
\begin{itemize}
  \item The category $\CPs[\FHilb]$, which is equivalent to the category of finite-dimensional C*-algebras and completely positive linear maps~\cite[Proposition~3.5]{coeckeheunenkissinger:cpstar}.
  \item The category $\CPs[\Rel]$, which is equivalent to the category of groupoids and \emph{inverse-respecting} relations~\cite[Proposition~5.3]{coeckeheunenkissinger:cpstar}. The latter are relations $R \subseteq G \times H$ between the sets of morphisms of two groupoids satisfying:
    \begin{align}
      (g,h) \in R & \Longrightarrow (g^{-1},h^{-1}) \in R, \label{eq:homrel1} \\
      (g,h) \in R & \Longrightarrow (\id[\dom(g)],\id[\dom(h)]) \in R. \label{eq:homrel2}
    \end{align}
\end{itemize}

\begin{remark}
The CP*-condition comes in a variety of equivalent forms, coming from the fact that:
\begin{equation}\label{eq:act-coact}
  \tikzfig{act-coact}
\end{equation}
(See also~\cite[Lemma~2.10]{coeckeheunenkissinger:cpstar}.) In particular, pre-composing both sides of \eqref{eq:cps-cond} with \whiteaction yields
\begin{equation}\label{eq:cps-equiv}
  \tikzfig{cps-equiv}
\end{equation}
which has a familiar interpretation in \FHilb. Here, $(A, \whitemult)$ is a finite-dimensional C*-algebra whose underlying vector space $A$ happens to be a Hilbert space. Hence there $(A, \whitemult)$ is canonically represented on $\mathcal B(A)$ by right-multiplication. 
Letting $\pi$ and $\chi$ be the representations of $(A, \whitemult)$ on $\mathcal B(A)$ and $(B, \graymult)$ on $\mathcal B(B)$ respectively, condition \eqref{eq:cps-equiv} says that $f$ is a CP-map if and only if 
\[ \chi(f(a)) = \tr_X(g \pi(a) g^\dagger) \]
for all $a \in A$, which is essentially just the Stinespring dilation of $\chi \circ f$. For finite-dimensional C*-algebras, saying that $\chi \circ f$ has a Stinesping dilation is indeed equivalent to saying that $f$ is a CP-map~\cite{choi:completelypositive}.
Combining~\eqref{eq:act-coact} and~\eqref{eq:cps-equiv} shows that $f \colon A \to B$ satisfies the CP*-condition if and only if
\begin{equation}\label{eq:cpstarconvolution}
  \tikzfig{cpstar_conv_form}
\end{equation}
for some object $X$ and morphism $h \colon A \to X \otimes B$.
\end{remark}

While we wish to remain as agnostic as possible about the base category $\cat C$, it is convenient to assume that objects $A$ in $\cat C$ are either a zero object $A \simeq 0$, or they are \emph{normalisable}, \textit{i.e.}\ there exists a positive isomorphism $\sqrt{d_A}\colon I \to I$ such that:
\[ \sqrt{d_A} \circ \sqrt{d_A} = \tikzfig{circle} =: d_A \]
We refer to $d_A$ as the \textit{dimension} of $A$. In \FHilb, this indeed gives the dimension of the Hilbert space.

We now recall some basic facts about the CP*-construction from \cite{coeckeheunenkissinger:cpstar}. First, there is a functor $\mathcal{B} \colon \cat{C} \to \CPs[\cat{C}]$ which is analogous to the passage from a linear map $f : H \to K$ to a CP-map $\Psi_f \colon \mathcal B(H) \to \mathcal B(K)$ of the form:
\[ \Psi_f(\rho) = f \rho f^\dagger \]
Explicitly, the functor sends an object $A$ of $\cat{C}$ to the special symmetric dagger Frobenius structure $A^* \otimes A$ with the following multiplication and unit:
\[
  \left(\sqrt{d_A}\right)^{-1} \tikzfig{pants} \qquad \qquad \sqrt{d_A} \ \tikzfig{cup}
\]
The functor sends a morphism $f$ in $\cat{C}$ to $f_* \otimes f$, which is clearly of the form~\eqref{eq:cps-cond}.

Second, \emph{$*$-homomorphisms}, \textit{i.e.}\ morphisms in $\cat C$ satisfying:
\[\tikzfig{starhomomorphism},\]
where $\dotidualiser{white dot} = \whitecoaction \circ (\id \otimes \whiteunit) \colon A^* \to A$, satisfy the CP*-condition, and hence are also morphisms in $\CPs[\cat C]$. To see this, we first give an alternative characterisation of a $*$-homomorphism.

\begin{lemma}\label{lem:starhomo}
  Let $f \colon (A,\graymult) \to (C,\whitemult)$ be a $*$-homomorphism. Then:
  \ctikzfig{starhomodirected}
\end{lemma}
\begin{proof}
  First use that $f$ preserves involution, then that it preserves multiplication:
  \ctikzfig{starhomodirected_proof}  
  The second equations follows similarly.
\end{proof}

The fact that $*$-homomorphisms satisfy the CP*-condition now follows immediately from Lemma~\ref{lem:starhomo} and equation~\eqref{eq:act-coact}.

The final basic concept we need before exploring the characterisation of quantum theory is the notion of \textit{causality}, which tells us which processes are deterministically physically realisable.

\begin{definition}
  A morphism $f \colon (A,\whitemult) \to (B,\graymult)$ is called \textit{causal} if it preserves co-units:
  \begin{equation}\label{eq:causal}
  \tikzfig{causal}
  \end{equation}
\end{definition}

The causality condition says, intuitively, that if we discard the output of a process, then it doesn't matter which process happened. This condition originated in work in operational probabilistic theories~\cite{Chiri2}, but was later formulated for any symmetric monoidal category whose objects come with some notion of `discarding'~\cite{coeckelalcausal,coecke2016terminality}. While it may not seem immediately obvious from its form, this requirement efficiently captures the condition that the inputs of any morphism $f$ in a string diagram can only have an effect in the causal future of $f$. That is, they can only affect the outputs of $f$, the outputs of morphisms connected to the outputs of $f$, and so on.

The abstract relationship between the causality equation and causal influence in a string diagram is explained in detail in e.g.~\cite{PQP,heunenvicary:cqt}, but for our purposes, it suffices to consider the causality equation as a type of normalisation condition.
In the case of CP-maps $\mathcal B(H) \to \mathcal B(K)$,  satisfying equation~\eqref{eq:causal} corresponds to being trace-preserving. For more general C*-algebras, this plays a similar role to the requirement of a CP-map to be unital.

\begin{remark}
  Note that our notion of causality requires a CP-map to preserve the \textit{counit}, rather than unit of the C*-algebra, which is more common in the algebraic quantum information literature. This is because we work in the Schr\"odinger picture, where morphisms $f \colon (A,\whitemult) \to (B,\graymult)$ represent processes which take states of system $A$ to states of system $B$. Since all the categories we work with are dagger categories (and are therefore isomorphic to their opposite categories), this is equivalent to the Heisenberg picture where \textit{unital} CP-maps $f' \colon (B,\graymult) \to (A,\whitemult)$ represent processes from $A$ to $B$.
\end{remark}

\section{Signalling}\label{sec:signalling}

This section focuses on the relationship between signalling and kinematic independence. To be able to capture these concepts for general $\CPs[\cat{C}]$, we need to introduce the notion of subsystem.

\begin{definition}
  Let $(C,\whitemult)$ be an abstract C*-algebra. A \emph{subsystem} is another abstract C*-algebra $(A,\graymult)$ together with a morphism $f \colon A \to C$ satisfying $f^\dag \circ f = \id[A]$ that is a unital $*$-homomorphism. We call $f$ the \emph{inclusion} of the subsystem, and depict it as $\subsystem[gray]{A}{C}$.
\end{definition}

\begin{remark}
  For $f = \subsystem[gray]{A}{C} = \id[A] \otimes \blackunit$ we actually have that $f^\dagger \circ f = d_B \otimes 1_A$, rather than $1_A$. We can fix this by re-normalising to $(\sqrt{d_B})^{-1} \otimes f$. For simplicity, we will generally suppress such normalisation factors unless they are important.
\end{remark}

In general, if $(C,\whitemult) = (A \otimes B, \graymult\hspace*{-3mm}\blackmult)$, then $(A,\graymult)$ and $(B,\blackmult)$ are subsystems with inclusions $\subsystem[gray]{A}{C} = \id[A] \otimes \blackunit$ and $\subsystem[black]{B}{C} = \grayunit \otimes \id[B]$. But there can also be subsystems that are not tensor factors.
For $\cat{C}=\cat{FHilb}$, subsystems correspond precisely to C*-subalgebras. 
Next we look at subsystems for $\cat{C}=\cat{Rel}$.

\begin{example}\label{lem:subsystem:rel}
  In $\Rel$, a subsystem of a groupoid $\cat{G}$ consists of a groupoid $\cat{H}$ and a multi-valued function $R \colon H \to \mathcal{P}_{\neq\emptyset}(G)$ satisfying
  \begin{align}
    R(h^{-1}) & = R(h)^{-1}, \label{eq:inverses} \\
    R(h \circ h') & = R(h) \circ R(h'), \label{eq:multiplicative} \\
    \bigcup_{x \in \cat{H}} R(\id[x]) & = \bigcup_{y \in \cat{G}} \{\id[y]\}, \label{eq:unital} \\
    R(h) \cap R(h') = \emptyset & \Leftrightarrow h = h'. \label{eq:isometry}
  \end{align}
  We will call such subsystems \emph{wide subgroupoids}. This is a slight abuse of terminology: the image of $R$ is an honest wide subgroupoid, but $R$ itself may map one morphism of $\cat{H}$ to many morphisms of $\cat{G}$. 

  To see the above statement, let a groupoid $\cat{H}$ and relation $R \subseteq H \times G$ form a subsystem of $\cat{G}$.
  As any relation, we may regard $R$ as a function $R \colon H \to \mathcal{P}(G)$.
  Isometry then says that $R(h) \neq \emptyset$, and that $R(h) \cap R(h')=\emptyset$ when $h \neq h'$, that is~\eqref{eq:isometry}.
  Hence $R$ is a multi-valued function.
  In these terms, $R$ being a unital $*$-homomorphism translates into~\eqref{eq:inverses}--\eqref{eq:unital}.

  There is some more structure to subsystems. If $g,g' \in R(h)$ have the same codomain, then
  \begin{align*}
    g^{-1} \circ g'
    \in R(h)^{-1} \circ R(h)
    & = R(h^{-1} \circ h) \\
    & = R(\id[\dom(h)]) \\
    & \subseteq \{ \id[y] \mid y \in \cat{G} \},
  \end{align*}
  so $g=g'$. Thus $R$ is in fact a single-valued function when restricted to subsets of $H$ of morphisms with common codomain.
\end{example}

The we call the dagger of a subsystem inclusion a subsystem \textit{restriction}:
\[ \subsystemdag{}{}{}\ =\ \left(\subsystem{}{}{}\right)^\dagger \]
Since subsystem inclusions are unital, restrictions are always causal:
\ctikzfig{subsys-causal}
and hence physically realisable. Restrictions furthermore have a clear operational interpretation: they correspond to the process of discarding (or ignoring) everything which is not local to the subsystem $A$.

Subsystems also allow us to say when an operation on the large system is actually localised to a subsystem.

\begin{definition}\label{def:local} 
  An endomorphism of $(C,\whitemult)$ in $\CPs[\cat{C}]$ is called \emph{local} to a subsystem $(A,\graymult)$ when it is of the form:
  \begin{equation}\label{eq:local-to-A}
    \tikzfig{local}
  \end{equation}
\end{definition}

The above definition captures abstractly the C*-algebraic notion of a CP-map only having support on a sub-algebra, i.e.\ a CP-map $\Phi : C \to C$ being of the form $\Phi(x) = \sum_i e_i x e_i^*$ for $e_i$ all elements of a sub-algebra $A$ of $C$. We will now see that when these sub-systems arise from tensor factors, this recovers the usual notion of being localised to a tensor factor, i.e.\ being of the form $\Phi \otimes 1$.

\begin{proposition}
  Let $(A,\whitemult)$ and $(B,\graymult)$ be objects in $\CPs[\cat{C}]$. 
  If a morphism $A \otimes B \to A \otimes B$ in $\CPs[\cat{C}]$ is local to $A$, then it is of the form $g \otimes \id[B]$ for some $g \colon (A,\whitemult) \to (A,\whitemult)$. Conversely, if $(A,\whitemult) = (C^*\otimes C, \pantsalg)$, then every such map has the form \eqref{eq:local-to-A}.
\end{proposition}
\begin{proof}
  Unfolding Definition~\ref{def:local} shows that the morphism $A \otimes B \to A \otimes B$ has the form
  \ctikzfig{localtensor}
  for some morphisms $E$. Thus we may take the left half of the right-hand diagram for $g$. 
  For the converse, it suffices to show that any endomorphism of $(C^*\otimes C, \pantsalg)$ takes the desired form. That is, for any $h$, there exists $E$ such that:
  \ctikzfig{all-bh-interior}
  Taking
  \ctikzfig{E-h-flip}
  this immediately follows from diagram deformation.
\end{proof}

For the case of local maps, causality takes a simpler form.

\begin{lemma}\label{lem:local-caus}
  A map of the form \eqref{eq:local-to-A} is causal if and only if:
  \ctikzfig{local-caus}
\end{lemma}

\begin{proof}
  Causality is given by the following equation:
  \ctikzfig{local-caus-pf}
  Applying a spider with two outputs to both sides yields:
  \ctikzfig{local-caus-pf1}
  The right-hand side equals the unit, by the symmetric spider theorem. If we deform the left-hand diagram, we can also apply the symmetric spider theorem there, yielding:
  \ctikzfig{local-caus-pf2}
  This finishes the proof.
\end{proof}

We can now introduce the two concepts of interest. First, we consider kinematic independence. This concept formalises when systems controlled by Alice and Bob do not influence each other's kinematics, namely when the associated sub-algebras commute with respect to each other. The following definition captures this in such a way
that for $\cat{C}=\FHilb$ it coincides with the notion given in~\cite{cliftonbubhalvorson}.

\begin{definition}\label{def:kinematicindependence}
  Let $(C,\whitemult)$ be a dagger Frobenius structure in a compact dagger category $\cat{C}$. We say that two subsystems $(A,\graymult)$ and $(B,\blackmult)$, with inclusions $\subsystem{A}{C}$ and $\subsystem[black]{B}{C}$, are \emph{kinematically independent} when the following equation is satisfied. 
  \begin{equation}\label{eq:kinematicindepence}
    \tikzfig{kinematicallyindependent}
  \end{equation}
\end{definition}

Note that, by applying the dagger to both sides, we can equivalently state kinematic indepences as:
\begin{equation}\label{eq:kinematicindepence-dag}
  \tikzfig{kinematicallyindependent-dag}
\end{equation}
This will be important when it comes to relating this concept to nonsignalling.

Notice that if $(C,\whitemult)$ decomposes into a tensor product $(A \otimes B, \graymult\hspace*{-3mm}\blackmult)$, then the subsystems $(A,\graymult)$ and $(B,\blackmult)$ are always kinematically independent. In other words, kinematic independence is a notion that essentially concerns subsystems that are not tensor factors. For example, in $\cat{C}=\cat{FHilb}$, kinematic independence means that the C*-subalgebras $A$ and $B$ commute. 

\begin{example}\label{ex:kinematicindependence:rel}
  Let $\cat{G}$ be a groupoid. Wide subgroupoids $\cat{A}$ and $\cat{B}$ are kinematically independent if and only if $S \circ T = T \circ S$ for all subsets $S \subseteq \cat{A}$ and $T \subseteq \cat{B}$ of morphisms. 
  Equivalently, when for all subsets $S \subseteq \cat{A}, T \subseteq \cat{B}$ and elements $a \in S$ and $b \in T$, if $a \circ b$ is defined then there exist $a' \in S, b' \in T$ such that $a \circ b = b' \circ a'$, and when $b \circ a$ is defined, there exist $a'' \in S, b'' \in T$ such that $a'' \circ b'' = b \circ a$.
\end{example}
\begin{proof}
  Plug the states $S$ and $T$ into Definition~\ref{def:kinematicindependence}. 
  Conversely, if $S \circ T = T \circ S$ for all states, then~\eqref{eq:kinematicindepence}.
\end{proof}

The next notion we consider prohibits superluminal information transfer. It says that when Alice and Bob both control a system, any data that Alice extracts from her system (through measurement) cannot instantaneously influence Bob's 
system. We formalise this as follows.

\begin{definition}\label{def:nosignalling}
  Let $(C,\whitemult)$ be a dagger Frobenius structure in a compact dagger category. Two subsystems $\subsystem{}{} \colon (A,\graymult) \to (C,\whitemult)$ and $\subsystem[black]{}{} \colon (B,\blackmult) \to (C,\whitemult)$ are \emph{no signalling} when
  \ctikzfig{nosignalling}
  for any $E, F$ which define causal maps local to $A$ and $B$, respectively.
\end{definition}

Again, tensor factors are automatically no signalling, making this notion essentially about subsystems that are not tensor factors: 
if $(C,\whitemult)$ decomposes into a tensor product $(A \otimes B, \graymult\hspace*{-3mm}\blackmult)$, then the subsystems $(A,\graymult)$ and $(B,\blackmult)$ are always no signalling.
For $\cat{C}=\FHilb$, our definition of no signalling comes down to the usual one employed in~\cite{cliftonbubhalvorson}. 

\begin{theorem}
  In $\CPs[\cat{C}]$ for compact dagger $\cat{C}$:  
  \[
    \text{no signalling} \;\Leftrightarrow\; \text{kinematic independence}
  \]
\end{theorem}
\begin{proof}
  First, assume kinematic indepence. Then applying the symmetric spider theorem and equation \eqref{eq:kinematicindepence-dag}:
  \ctikzfig{ki_implies_ns}
  A second application of the symmetric spider theorem, as well as Lemma~\ref{lem:local-caus} yields the first no-signalling equation:
  \ctikzfig{ki_implies_ns1}
  The second equation is similar.

  Conversely, assume no signalling. First note that taking $E = \id[A]$ in \eqref{eq:local-to-A} yields a causal map, by Lemma~\ref{lem:local-caus}:
  \ctikzfig{id-local-caus}
  Hence the first no signalling equation from Definition~\ref{def:nosignalling} applies:
  \ctikzfig{nosignalling-id}

  Applying this to the left-hand side of~\eqref{eq:kinematicindepence-dag} introduces a loop:
  \ctikzfig{nosig-pf}
  We then apply (the dagger of) the homomorphism equation:
  \ctikzfig{nosig-pf1}
  then apply the homomorphism on the other side:
  \ctikzfig{nosig-pf2}
  and finally remove the loop:
  \ctikzfig{nosig-pf3}
  This finishes the proof.
\end{proof}

\section{Broadcasting}\label{sec:broadcasting}

We now give a definition of broadcasting, generalising that of~\cite{cliftonbubhalvorson}.

\begin{definition}\label{def:CBHbroadcasting}
  Let $(C,\graymult)$ be an object of $\CPs[\cat{C}]$, and let
  $\subsystem[white]{}{}, \subsystem[black]{}{} \colon (A,\whitemult) \to (C,\graymult)$ be two kinematically independent subsystems of $(C,\graymult)$ whose domains are a fixed algebra $(A, \whitemult)$. A \emph{broadcasting map} is a morphism $B \colon A \to C$ in $\CPs[\cat{C}]$ satisfying the following equation.
  \begin{equation}\label{eq:CBHbroadcasting}
    \tikzfig{cbhbroadcasting}
  \end{equation}
  We say $A$ is \emph{broadcastable} when there exists a broadcasting map for some $C$.
\end{definition}

If $(A, \whitedot)$ is commutative, we can simply take $C := A$, and $\subsystem[white]{}{} = \subsystem[black]{}{} = \id[A]$. Hence, commutatitivity trivially implies broadcastability. In the case of $\CPs[\cat{FHilb}]$, broadcastability implies commutativity, thanks to~\cite[Theorem 3]{cliftonbubhalvorson}. However, as we will show in this section, when we pass to arbitrary $\CPs[\cat{C}]$, this is no longer the case.

As in the case of concrete C*-algebras, Definition~\ref{def:CBHbroadcasting} generalises the older, more familiar notion of broadcasting presented by Barnum et al~\cite{NoBroadcasting}, which we will call $\otimes$-broadcasting.

\begin{definition}\label{def:tensor-broadcasting}
  A \emph{$\otimes$-broadcasting map} for an object $(A,\whitemult)$ of $\CPs[\cat{C}]$ is a morphism $b \colon A \to A \otimes A$ in $\CPs[\cat{C}]$ satisfying the following equation.
  \begin{equation}\label{eq:broadcasting}
    \tikzfig{broadcasting}
  \end{equation}
  The object $(A,\whitemult)$ is called \emph{$\otimes$-broadcastable} when
  there exists a $\otimes$-broadcasting map.
\end{definition}

\begin{lemma}\label{lem:commutativebroadcasting}
  Let $\cat{C}$ be a compact dagger category. Commutative dagger Frobenius structures in $\cat{C}$ are $\otimes$-broadcastable objects in $\CPs[\cat{C}]$.  
\end{lemma}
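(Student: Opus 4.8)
The plan is to exhibit the comultiplication $\whitecomult := (\whitemult)^\dagger$ itself as a broadcasting map. Writing $B := \whitecomult \colon A \to A \otimes A$, there are exactly two things to establish: that $B$ is a genuine morphism of $\CPs[\cat{C}]$, and that it satisfies the broadcasting equation~\eqref{eq:broadcasting}. The second is immediate and uses no hypothesis beyond the monoid axioms, whereas the first is precisely where commutativity is needed.

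I would dispatch the broadcasting equation first. Equation~\eqref{eq:broadcasting} requires that discarding either of the two output systems of $B$ leaves the input untouched, that is, $(\whitecounit \otimes \id[A]) \circ \whitecomult = \id[A] = (\id[A] \otimes \whitecounit) \circ \whitecomult$. These are just the counit laws of the dagger Frobenius structure, obtained by applying $(-)^\dagger$ to the unit laws $\whitemult \circ (\whiteunit \otimes \id[A]) = \id[A] = \whitemult \circ (\id[A] \otimes \whiteunit)$, which hold in any monoid. Hence $B$ satisfies~\eqref{eq:broadcasting} regardless of commutativity; this is consistent with the fact that commutativity should govern only whether $B$ is a legitimate (completely positive) process.

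The substantive step is to verify that $B$ is a morphism of $\CPs[\cat{C}]$. Rather than checking the CP*-condition for $\whitecomult$ directly, which is awkward, the strategy is to observe that its adjoint $\whitemult$ is a $*$-homomorphism $(A,\whitemult) \otimes (A,\whitemult) \to (A,\whitemult)$. For a \emph{commutative} structure the multiplication preserves the tensor-product monoid structure on $A \otimes A$: comparing $\whitemult$ applied after the multiplication of $A \otimes A$ with $\whitemult$ applied to the two tensor factors separately, the two composites agree exactly after commuting the two inner wires past each other, which is the commutativity equation; unit preservation follows from the unit law. One then reads off the $*$-homomorphism equation stated above (with the idualiser $\dotidualiser{white dot}$), and the cited fact that $*$-homomorphisms are always morphisms of $\CPs[\cat{C}]$ shows $\whitemult$ is such a morphism. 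Since $\CPs[\cat{C}]$ is a compact \emph{dagger} category, the adjoint of $\whitemult$ is again a morphism of $\CPs[\cat{C}]$, and this adjoint is computed as the underlying map $\whitemult^\dagger = \whitecomult = B$, now of the required type $(A,\whitemult) \to (A,\whitemult) \otimes (A,\whitemult)$.

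I expect the only real obstacle to be the commutativity step: pinning down, in the string-diagram verification that $\whitemult$ is multiplicative, the single place where associativity and the Frobenius law do not suffice and genuine commutativity is forced. This is reassuring, since the companion no-broadcasting phenomenon should be driven by the failure of $\whitemult$ to be a $*$-homomorphism once $A$ is noncommutative. A minor bookkeeping point is to confirm that the discarding effect intended in~\eqref{eq:broadcasting} is indeed the Frobenius counit $\whitecounit$, and that the monoidal product of $(A,\whitemult)$ with itself in $\CPs[\cat{C}]$ carries the tensor-product Frobenius structure used above.
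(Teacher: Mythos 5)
Your proof is correct, but it establishes the key point --- that $\whitecomult$ satisfies the CP*-condition --- by a genuinely different route from the paper. The paper argues directly on the sandwiched map: using the spider theorem for commutative dagger Frobenius structures, it rewrites the composite of $\whitecomult$ with a coaction below and two actions above as the composite of the identity on $(A,\whitemult)$ in $\CPs[\cat{C}]$ with $\mathcal{B}(\whitecomult)$, both of which are completely positive, so their composite is too. You instead show that the adjoint $\whitemult \colon (A,\whitemult)\otimes(A,\whitemult) \to (A,\whitemult)$ is a $*$-homomorphism --- with commutativity entering exactly as the Eckmann--Hilton-style interchange needed for multiplicativity --- then invoke the cited fact that $*$-homomorphisms are morphisms of $\CPs[\cat{C}]$ together with closure of $\CPs[\cat{C}]$ under daggers. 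Your route localizes the use of commutativity cleanly and avoids the spider theorem, at the cost of leaning on two structural facts about $\CPs[\cat{C}]$ (that $*$-homomorphisms are completely positive, and that the dagger is inherited from $\cat{C}$); the paper's route is more self-contained diagrammatically. The one thin spot in your write-up is the involution half of the $*$-homomorphism condition: you say it can be ``read off'', but it is a separate equation involving $\dotidualiser{white dot}$ and the conjugate of $\whitemult$, and verifying it also uses commutativity and symmetry of the Frobenius structure; it does hold, but you should carry out that diagram chase rather than gesture at it. Your treatment of the broadcasting equation itself, via the counit laws, matches the paper's ``it clearly satisfies~\eqref{eq:broadcasting}''.
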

\begin{proof}
  Suppose that $(A,\whitemult)$ is commutative. We will show that
  $\whitecomult$ is a broadcasting map. It clearly
  satisfies~\eqref{eq:broadcasting}, so it suffices to show that it is
  a well-defined morphism in $\CPs[\cat{C}]$. Using the spider theorem for
  commutative dagger Frobenius structures~\cite[Lemma~3.1]{coeckepaquette:naimark}, we obtain the following:
  \ctikzfig{commutativebroadcasting}
  Therefore $\dotaction{white dot}\!\!\!\!\!\dotaction{white dot}
  \circ \whitecomult \circ \whitecoaction$ is a composition of the identity on $(A,\whitemult)$ in $\CPs[\V]$ and the image of $\whitecomult$ under the functor $\mathcal{B}$. Since these are both completely positive, so is their composition. Thus $\whitecomult$ is a well-defined morphism in $\CPs[\V]$. 
\end{proof}

In particular, we can conclude that commutative C*-algebras are not only broadcastable, but also $\otimes$-broadcastable.

A groupoid is \emph{skeletal} when its only morphisms are endomorphisms. Equivalently, it is a disjoint union of groups.

\begin{lemma}\label{lem:broadcastableRel}
  $\otimes$-broadcastable objects in $\CPs[\Rel]$ are precisely skeletal 
  groupoids.  
\end{lemma}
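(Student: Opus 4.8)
The plan is to transport the statement across the equivalence between $\CPs[\Rel]$ and groupoids with inverse-respecting relations, and then argue entirely combinatorially. Under this equivalence an object is a groupoid $G$, the counit $\whitecounit$ is the relation $\{(\id[x],*) : x \in \Ob(G)\}$ that picks out the identity arrows, the identity morphism on $G$ is the diagonal $\{(g,g)\}$, and a broadcasting map is an inverse-respecting relation $B \subseteq G \times (G\times G)$, i.e.\ one satisfying \eqref{eq:homrel1} and \eqref{eq:homrel2}. First I would unpack the broadcasting equation \eqref{eq:broadcasting}: it asserts that composing $B$ with the counit on either output factor returns the diagonal. Computing these two composites in $\Rel$ turns \eqref{eq:broadcasting} into four concrete conditions: (i) for every $a$ there is an identity $q$ with $(a,(a,q))\in B$; (ii) whenever $(a,(p,q))\in B$ with $q$ an identity then $p=a$; and the pair (i$'$),(ii$'$) obtained by interchanging the two output coordinates, so that (ii$'$) reads: whenever $(a,(p,q))\in B$ with $p$ an identity then $q=a$. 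Recall finally that a totally disconnected groupoid is precisely a disjoint union of its vertex groups, i.e.\ one in which $\dom(a)=\cod(a)$ for every arrow $a$.

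For the forward direction I would show that any broadcastable $G$ is totally disconnected by a short chase through these conditions. Fix an arrow $a$ and use (i) to obtain an identity $q$ with $(a,(a,q))\in B$. Applying \eqref{eq:homrel2} yields $(\id[\dom(a)],(\id[\dom(a)],\id[\dom(q)]))\in B$; since its first output $\id[\dom(a)]$ is an identity, (ii$'$) forces $\id[\dom(q)]=\id[\dom(a)]$, hence $q=\id[\dom(a)]$ and $(a,(a,\id[\dom(a)]))\in B$. Now apply \eqref{eq:homrel1} to pass to $a^{-1}$, giving $(a^{-1},(a^{-1},\id[\dom(a)]))\in B$, and then \eqref{eq:homrel2}; using $\dom(a^{-1})=\cod(a)$ this produces $(\id[\cod(a)],(\id[\cod(a)],\id[\dom(a)]))\in B$. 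The first output is again an identity, so (ii$'$) gives $\id[\dom(a)]=\id[\cod(a)]$, that is $\dom(a)=\cod(a)$. Thus every arrow is an endomorphism.

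For the converse I would exhibit an explicit broadcasting map on a totally disconnected $G$, namely $B = \{(g,(g,\id[\dom(g)])) : g\in G\} \cup \{(g,(\id[\dom(g)],g)) : g\in G\}$, and verify the four conditions directly. Conditions (i),(i$'$) hold by inspection, and (ii),(ii$'$) hold because the only elements of $B$ having an identity in a given output slot are the intended ones. Checking that $B$ is inverse-respecting is where total disconnectedness is genuinely used: closure under \eqref{eq:homrel1} requires $\dom(g^{-1})=\dom(g)$, which holds precisely because $\cod(g)=\dom(g)$, while closure under \eqref{eq:homrel2} is immediate.

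I expect the main obstacle to be the first, bookkeeping step: correctly translating the abstract equation \eqref{eq:broadcasting} into the relational marginal conditions, and then finding the order in which to apply \eqref{eq:homrel1} and \eqref{eq:homrel2} so that the ``identity-output'' conditions (ii),(ii$'$) can be triggered to collapse $\dom(a)$ onto $\cod(a)$. It is worth noting that, unlike in $\FHilb$, broadcastability here is strictly weaker than commutativity of $\whitemult$ (which corresponds to abelian groupoids): non-abelian vertex groups remain broadcastable even though the comultiplication $\whitecomult$ itself fails to be inverse-respecting, which is exactly why the bespoke $B$ above is needed in place of $\whitecomult$.
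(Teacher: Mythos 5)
Your proof is correct and follows essentially the same route as the paper: the same explicit broadcasting map $B = \{(g,(g,\id[\dom(g)]))\}\cup\{(g,(\id[\dom(g)],g))\}$ for the converse, and the same chase through the marginal conditions together with \eqref{eq:homrel1} and \eqref{eq:homrel2} to force $\dom(a)=\cod(a)$ in the forward direction. Your unpacking of \eqref{eq:broadcasting} into the four conditions (i), (ii), (i$'$), (ii$'$) is in fact slightly more careful than the paper's condensed statement ($*$), and you correctly locate the use of total disconnectedness in the inverse-respecting check for $B$.
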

\begin{proof}
  Let $\cat{G}$ be a skeletal (small) groupoid, and write $G$ for its set of morphisms. We will show that the morphism
  $b \colon G \to G \times G$ in $\Rel$ given by
  \begin{align*}
    b = & \{ (f,(\id[\dom(f)],f)) \mid f \in \Mor(\cat{G}) \} \\
    & \cup \{ (f,(f,\id[\dom(f)])) \mid f \in \Mor(\cat{G}) \} 
  \end{align*}
  is a broadcasting map. First of all, $b$ is readily seen to respect identities (property \eqref{eq:homrel2}). Since $\cat{G}$ is skeletal, $\id[\dom(f)] = \id[\cod(f)] = \id[\dom(f^{-1})]$, so $f$ also preserves inverses (property \eqref{eq:homrel1}). Hence, it is a well-defined morphism in $\CPs[\Rel]$.

  When interpreted in $\Rel$, the broadcastability equation~\eqref{eq:broadcasting} reads:
  \begin{equation}
    \begin{aligned}
    \begin{array}{lcl}
    \{ (f,f) \mid f \in G \} &=& \{(f,g) \mid \exists C. (f,(\id[C],g)) \in
    b \} \\ &=& \{ (f,g) \mid \exists C. (f,(g,\id[C])) \in b \}.
    \end{array}
    \end{aligned}\tag{$*$}
  \end{equation}
  This is satisfied by $b$ as defined above. By definition, $(f, (\id[\dom(f)], f)$ is in $b$, so the LHS of the first equation above is contained in the RHS. On the other hand, if for some $g, C$, we have $(f, (\id[C], g)) \in b$, then $C = \dom(f)$ and $g = f$, so the LHS also contains the RHS. The second equation follows symmetrically.

  Conversely, suppose that a small groupoid $\cat{G}$ is broadcastable. Then there is a morphism $b$ in $\Rel$ that respects inverses, and satisfies ($*$). Let $f \in \Mor(\cat{G})$. By~($*$), there is an object $C$ of
  $\cat{G}$ such that $(f,(\id[C],f)) \in b$. Next, \eqref{eq:homrel2} gives $(\id[\dom(f)], (\id[C], \id[\dom(f)]))
  \in b$ and $C=\dom(f)$. But by~\eqref{eq:homrel1}, it is also the case that $(f^{-1},
  (\id[C],f^{-1})) \in B$. So, using~($*$) and~\eqref{eq:homrel2}
  again, we also have $(\id[\cod(f)], (\id[C], \id[\cod(f)]))$ and $C =
  \cod(f)$. Hence $\dom(f)=\cod(f)$. Thus $\cat{G}$ is skeletal.
\end{proof}

\begin{lemma}\label{lem:BBLWimpliesCBHbroadcasting}
  Let $b$ be an $\otimes$-broadcasting map for an object $(A,\whitemult)$. Then it is also a broadcasting map, where $C:=A \otimes A$, $\subsystem[white]{}{} = \idpic \whiteunit$, and $\subsystem[black]{}{}=\whiteunit \idpic$. Hence $\otimes$-broadcastability implies broadcastability.
\end{lemma}
\begin{proof}
  Simply unfold the definitions.  
\end{proof}

\begin{theorem}\label{thm:broadcastingvscommutativity}
  In $\CPs[\cat{C}]$ for general $\cat{C}$:
  \[
  \text{no broadcasting}
  \begin{aligned}\begin{array}{c}
    \Rightarrow \\[-1ex]
    \nLeftarrow
  \end{array}\end{aligned}
  \text{noncommutativity}
  \]
\end{theorem}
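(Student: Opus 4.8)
The plan is to handle the two arrows separately, since the forward implication is essentially free and only the non-implication requires a construction.

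For ``no broadcasting $\Rightarrow$ noncommutativity'' I would simply take the contrapositive of Lemma~\ref{lem:commutativebroadcasting}. That lemma already shows that every commutative dagger Frobenius structure in $\cat{C}$ is broadcastable in $\CPs[\cat{C}]$; hence any object of $\CPs[\cat{C}]$ that admits no broadcasting map cannot be commutative. This argument is valid for arbitrary $\cat{C}$ and needs nothing beyond the cited lemma.

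For the reverse non-implication ``noncommutativity $\nRightarrow$ no broadcasting'' it suffices to produce a single $\cat{C}$ and a single object of $\CPs[\cat{C}]$ that is noncommutative yet broadcastable, since the claim is only that the implication fails in general. I would work in $\cat{C} = \Rel$ and appeal to Lemma~\ref{lem:broadcastableRel}, which identifies the broadcastable objects of $\CPs[\Rel]$ with the totally disconnected groupoids. Any single group $G$, viewed as a one-object groupoid and hence as a dagger Frobenius structure in $\Rel$, has only endomorphisms and is therefore totally disconnected; by the lemma it is broadcastable. Choosing $G$ non-abelian --- for instance $G = S_3$ --- then yields the desired object, provided its Frobenius structure is genuinely noncommutative.

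That last proviso is the only delicate point, and the step I would flag as the main obstacle. It requires the standard dictionary (underlying the equivalence between $\CPs[\Rel]$ and groupoids quoted above) between dagger Frobenius structures in $\Rel$ and groupoids, under which the multiplication $\whitemult$ is the partial composition relation of the groupoid. Under this dictionary, commutativity of the Frobenius structure --- equality of $\whitemult$ with its composite with the swap --- unpacks precisely to $gh = hg$ for all composable $g,h$, which for a one-object groupoid is abelianness of $G$. Verifying this unpacking is a short computation in $\Rel$, but it is exactly where the counterexample would silently fail if the correspondence were applied carelessly, so I would carry it out explicitly for $S_3$ to confirm that $\whitemult$ and its swapped version disagree.
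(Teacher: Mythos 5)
Your proof is correct and follows essentially the same route as the paper: the forward implication is the contrapositive of Lemma~\ref{lem:commutativebroadcasting}, and the failure of the converse is witnessed in $\CPs[\Rel]$ via Lemma~\ref{lem:broadcastableRel}. Your counterexample --- a non-abelian group such as $S_3$, totally disconnected (hence broadcastable) but with noncommutative Frobenius multiplication since commutativity of the groupoid structure unpacks to abelianness --- is in fact spelled out more explicitly than the paper's one-line justification, which only remarks that broadcastability (total disconnectedness) and commutativity do not coincide for groupoids.
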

\begin{proof}
  The implication is given by Lemmas~\ref{lem:commutativebroadcasting} and~\ref{lem:BBLWimpliesCBHbroadcasting}.
  The other implication does not hold in $\CPs[\Rel]$; we develop a counterexample.
  Let $G$ be a noncommutative group, considered as a skeletal groupoid. By Lemma~\ref{lem:broadcastableRel}, it is $\otimes$-broadcastable and hence broadcastable. However, it is noncommutative by definition.
\end{proof}

\begin{remark}
  In $\cat{C}=\FHilb$, commutativity and broadcastability are equivalent. They also coincide with a third notion of classicality, namely that a C*-algebra is a direct sum of 1-dimensional C*-algebras. This can be phrased for general compact dagger categories $\cat{C}$ with biproducts, for in that case $\CPs[\cat{C}]$ inherits biproducts~\cite{heunenkissingerselinger:idempotents}. Hence we can consider objects in $\CPs[\cat{C}]$ that arise as biproducts of the monoidal unit. One can straightforwardly show that all such objects give commutative Frobenius algebras, whereas any nontrivial abelian group in $\CPs[\cat{Rel}]$ is commutative, but not a biproduct of units. Hence, we can refine Theorem~\ref{thm:broadcastingvscommutativity} as follows:
  \[
  \text{biproduct of unit}
  \begin{aligned}\begin{array}{c}
    \Rightarrow \\[-1ex]
    \nLeftarrow
  \end{array}\end{aligned}
  \text{commutative}
  \begin{aligned}\begin{array}{c}
    \Rightarrow \\[-1ex]
    \nLeftarrow
  \end{array}\end{aligned}
  \text{broadcastable}
  \]
\end{remark}

Finally, since broadcasting coincides with commutativity when $\cat{C} = \cat{FHilb}$, then broadcasting and $\otimes$-broadcasting must also coincide. We leave open the question of whether this is also true for $\cat{Rel}$.

\section{Bit commitment}\label{sec:commitment}

Briefly, \emph{bit commitment} is the following two-party protocol. Alice claims to know something, and Bob wants to verify that Alice indeed has that knowledge, but Alice doesn't want to reveal her secret yet. Let's say the information is a single bit; Bob wants Alice to commit to either `heads' or `tails' now, and wants to be able to verify her committed value later. Alice could cheat by changing the value she committed to later on; if this is impossible the protocol is \emph{binding}. Bob could cheat by learning the value Alice committed to before she is ready to unveil it; if this is impossible the protocol is \emph{concealing}. A \emph{secure} bit commitment protocol is one where neither cheat is possible. Secure bit commitment is possible under the assumption that a pair of quantum systems can only inhabit classically correlated states. Hence, impossibility of secure bit commitment implies the existence of entangled states \cite{cliftonbubhalvorson}. 
We can model it categorically as follows.

\begin{definition}
  A \emph{bit commitment protocol} for a system $C$ and two sub-subsystems $A, B$ of $C$ consists of two states $H,T \colon I \to C$ of $\CPs[\cat{C}]$ which is \emph{concealing} when:
  \begin{equation}\label{eq:concealing}
    \tikzfig{concealing-sub}
  \end{equation}
  and \emph{binding} when there exists no state $\cheat : I \to C$ and morphisms $c_H, c_T$ localised to $A$:
  \[ \tikzfig{ch-local} \qquad\qquad\qquad\qquad \tikzfig{ct-local} \]
  such that $c_H \circ \cheat = H$ and $c_T \circ \cheat = T$.
  Finally, it is \emph{secure} when it is concealing, and binding.
\end{definition}

This describes a bit commitment protocol in the following way. We assume that initially Alice has access to sub-systems $A$ and $B$, which enable her to prepare the states $H$ and $T$ of $C$. During the commitment phase of the protocol, Alice prepares either $H$ or $T$ and gives subsystem $B$ to Bob. Concealing says that at this point, Bob is unable to determine Alice's commitment. Then, for the reveal phase, Alice gives the remainder of $C$ to Bob, at which point he can ascertain whether he has state $H$ or $T$. Binding says that if Alice only has access to her own system $A$ (namely, after sending system $B$ to Bob), there is no way for Alice to change her commitment.

One thing to note here is that we have somewhat substantially limited Alice's resources if she wishes to cheat, in that she is not allowed to start with a state $\cheat$ which initially occupies a larger system, unbeknownst to Bob, and only send some part of that system in the reveal phase. This allows us to consider situations where secure bit commitment \textit{is} possible, namely when Alice is unable to perform an attack using entanglement.

\begin{example}
  In $\CPs[\FHilb]$, fix the systems $B = (\mathbb C^2, \whitemult)$ and $Q = ((\mathbb C^2)^* \otimes \mathbb C^2, \pantsalg)$ consisting of a bit and qubit. Then the states
  \begin{align*}
  H & = \frac12 \left(\ket{0} \otimes \ketbra{0}{0} + \ket{1} \otimes \ketbra{1}{1} \right) \\
  T & = \frac12 \left(\ket{0} \otimes \ketbra{+}{+} + \ket{1} \otimes \ketbra{-}{-} \right)
  \end{align*}
  give a (na\"ive) secure bit commitment protocol for $(B \otimes Q, B, Q)$. It is concealing because deleting the left system yields the maximally mixed state:
  \[ (\bra{+} \otimes 1) \circ H = (\bra{+} \otimes 1) \circ T = \frac12 \left(\ketbra{0}{0} + \ketbra{1}{1} \right) \]
  and binding because every local operation on a generic state of the form
  \[ p \ket{0} \otimes \rho_0 + q \ket{1} \otimes \rho_1 \]
  yields a state
  \[ p' \ket{0} \otimes \rho_0' + q' \ket{1} \otimes \rho_1' \]
  where $\rho_i'$ is a mixture of the states $\rho_0, \rho_1$. Hence no fixed state \cheat{} can yield both $H$ and $T$ under local operations.
\end{example}

We now translate the result of \cite{cliftonbubhalvorson} into our language. Note that we say an object in a \CPs-category is \textit{noncommutative} if its associated Frobenius structure is noncommutative.

\begin{theorem}
  For a system $C$ in $\CPs[\cat{FHilb}]$ with subsystems $A, B$ such that $B$ is noncommutative, the impossibility of bit commitment entails the existence of an entangled state.
\end{theorem}
\begin{proof}[Proof(sketch)]
    It suffices to show that there exist distinct states $\rho_1, \rho_2$ of $B$ such that:
    \ctikzfig{concealing-hilb}
    yet there exists no classically correlated state and localisable maps $c_1, c_2$ such that $c_1 \circ \rho = \rho_1$ and $c_2 \circ \rho = \rho_2$. This is indeed the case for any noncommutative C*-algebra. Hence, if all states $\rho$ of $C$ are classically correlated, this gives the data of a secure bit commitment protocol, and contrapositively, the impossibility of secure bit commitment necessitates the existence of an entangled state.
\end{proof}

In \cite{cliftonbubhalvorson}, the authors conjectured that the converse is also true, namely that the existence of an entangled state implies the impossibility of secure bit commitment. We now show that, at least in the more general setting, this is not the case.

\begin{proposition}\label{prop:bitcommitment}
  The following defines a secure bit commitment protocol in $\Rel$. Let $\cat{A}$ and $\cat{B}$ be indiscrete groupoids on 2 and 3 objects, respectively. For convenience, we name their morphisms as follows:
  \begin{align*}
    \cat{A} \quad & = \qquad \begin{tikzpicture}
      \node (l) at (0,0) {$\bullet$};
      \node (r) at (2,0) {$\bullet$};
      \draw[->] (l) to node[auto]{$z$} (r);
      \draw[->] (l) to[out=135,in=-135,looseness=4] node[left]{$x$} (l);
      \draw[->] (r) to[out=45,in=-45,looseness=4] node[right]{$y$} (r);
    \end{tikzpicture} \\
    \cat{B} \quad & = \qquad \begin{tikzpicture}
      \node (l) at (0,0.75) {$\bullet$};
      \node (r) at (2,0.75) {$\bullet$};
      \node (d) at (1,-0.75) {$\bullet$};
      \draw[->] (l) to[out=135,in=-135,looseness=4] node[left]{$a$} (l);
      \draw[->] (r) to[out=45,in=-45,looseness=4] node[right]{$b$} (r);
      \draw[->] (d) to[out=-45,in=-135,looseness=4] node[below]{$c$} (d);
      \draw[->] (l) to node[auto]{$d$} (r);
      \draw[->] (l) to node[auto,swap]{$e$} (d);
      \draw[->] (r) to node[auto]{$f$} (d);
    \end{tikzpicture}
  \end{align*}
  (note the inverses $d^{-1}$, $e^{-1}$, $f^{-1}$, and $z^{-1}$ are not pictured).
  Let $\cat{C} = \cat{A} \times \cat{B}$, and define the following states of $\cat{C}$:
  \begin{align*}
    H & = (\{x\} \times \cat{B}) \cup \big\{(y,b) ,(z,d) ,(z^{-1},d^{-1})  \big\}, \\
    T & = (\{x\} \times \cat{B}) \cup \big\{(y,c) ,(z,e) ,(z^{-1},e^{-1})  \big\}\text.
  \end{align*}
  Then, $(\cat{A}, \cat{B}, H, T)$ gives a bit commitment protocol for $\cat{C}$.
\end{proposition}
\begin{proof}
  The key to this counterexample is that $H$ and $T$ are similar enough to be concealing, but still distinct enough that Alice cannot possibly find local maps $R$ and $S$ which produce them from a single state.

  $H, T$ are closed under taking domains and inverses, so they are indeed states in $\CPs[\Rel]$. To see that this is concealing, note that $(\whitecounit \idpic) \circ H = (\whitecounit \idpic) \circ T = \cat{B}$. To see that this protocol is binding, suppose there exists a state $\cheat$ and maps $R, S \colon \cat{A} \to \cat{A}$ such that both $(R \otimes 1_{\cat{B}}) \circ \cheat = H$ and $(S \otimes 1_{\cat{B}}) \circ \cheat = T$. For a morphism $p \in \cat{A}$, let $\langle p \rangle \colon \cat{A} \to \{*\}$ be the (not necessarily inverse-respecting) relation $\{(p,*)\}$. Then, for each of the following values for $\pi \colon \cat{A} \to \{*\}$:
  \[ 
    \langle y \rangle \circ R, 
    \langle z \rangle \circ R, 
    \langle {z^{-1}} \rangle \circ R, 
    \langle y \rangle \circ S, 
    \langle z \rangle \circ S, 
    \langle {z^{-1}} \rangle \circ S, 
  \]
  the composition $(\pi \otimes 1_{\cat{B}}) \circ \cheat$ yields a distinct singleton subset of $\cat{B}$ (namely: $\{b\}$, $\{d\}$, $\{d^{-1}\}$, $\{c\}$, $\{e\}$, and $\{e^{-1}\}$, respectively). Hence each of these 6 effects, regarded as subsets of $\cat{A}$, must contain at least one element which is not in the other 5. Since $\cat{A}$ has only 4 morphisms, this yields a contradiction.
\end{proof}

\begin{definition}\label{def:entangled}
  Let $\cat{C}$ be a compact dagger category, and $C$ an object in $\CPs[\cat{C}]$.
  A state $\rho \colon I \to C$ is \emph{classically correlated} with respect to subsystems $A$ and $B$ of $C$ when there exist a broadcastable object $X$ in $\CPs[\cat{C}]$, a state $p \colon I \to X$, and a morphism $f \colon X \to C$ satisfying
  $\rho = f \circ p$ and:
  \ctikzfig{classicallycorrelated}
  We say that $\rho$ is \emph{entangled} when it is not classically correlated.
\end{definition}

\begin{lemma}\label{lem:nonlocal:rel}
  Let $\cat A$ and $\cat B$ be wide subgroupoids of $\cat C$, regarded as subsystems in $\CPs[\cat{Rel}]$, and let $R \colon 1 \to \cat C$ be a morphism, regarded as a subset $R \subseteq \cat C$. Suppose that $R$ classically correlates $\cat A$ and $\cat B$.
  If $a \circ b \in R$, then $\id[\dom(a)] \in R$.
\end{lemma}
\begin{proof}
  Definition~\ref{def:entangled} unfolds as follows: there is a skeletal $G$ and maps $S \colon G \to C$ and $P \colon I \to G$ with $R = P \circ S$ and such that $(g,a \circ b) \in S$ if and only if $(h,a) \in S$ and $(k,b) \in S$ and $g = h \circ k$ for some $h,k \in G$.

  If $a \circ b \in R$, then there is $g \in P$ with $(g, a\circ b) \in S$. 
  So there are $g,h \in G$ with $g \circ h \in P$ and $(g,a) \in S$ and $(b,h) \in S$. 
  Write $e$ for identity on $\dom(g)=\cod(h)$; because $G$ is skeletal also $\dom(h)=\cod(h)$ and hence $e \in P$. 
  Then $(e,\id[\dom(a)]) \in S$, and so $\id[\dom(a)] \in R$.
\end{proof}

\begin{remark}\label{rem:nonlocal:rel}
  To illustrate the slight abuse of notation in Example~\ref{lem:subsystem:rel}, let us show that in case $\cat{C}=\cat{A} \times \cat{B}$ the subsystems $\cat{A}$ and $\cat{B}$ are in fact multi-valued. In the setting of the previous lemma, we will show that if $(a,b) \in R$, then $(\id[\dom(a)],\id[\cod(b)]) \in R$. 

  Because $C=A \times B$, Definition~\ref{def:entangled} simplifies as follows: we may regard $R$ as a map $A \to B$; there is a skeletal groupoid $G$ and maps $S \colon G \to A \times B$ and $P \colon I \to G$ with:
  \ctikzfig{entangled_rel}
  If $(a,b) \in R$ then there is $g \in P$ with $(g,(a,b)) \in S$. 
  Hence there are $g,h \in G$ with $g \circ h \in P$ and $(g,(a,\id[y])) \in S$ for some object $y$ of $B$ and $(h,(\id[x],b)) \in S$ for some object $x$ of $A$. 
  Again the identity $e = e \circ e$ on $\dom(g)=\cod(h)$ is in $P$, and $(e,(\id[\dom(a)],\id[y])) \in S$ and $(e,(\id[x],\id[\cod(b)])) \in S$. Therefore $(\id[\dom(a)],\id[\cod(b)]) \in R$.
\end{remark}

\begin{theorem}\label{thm:bitcommitmentvslocality}
  In $\CPs[\cat{Rel}]$:
  \[
  \text{no bit commitment}
  \nLeftarrow
  \text{entanglement}
  \]
\end{theorem}
\begin{proof}
  Proposition~\ref{prop:bitcommitment} exhibited a secure bit commitment protocol in $\CPs[\cat{Rel}]$. It suffices to show that the system $\cat{C}$ defined there admits an entangled state. Let
  \[ E = \big\{ (x, a), (y,c), (z,e), (z^{-1}, e^{-1}) \big\} \]
  be a state of $\cat{C}$. Then, note that $(z, e) = (y, f) \circ (z, d)$. However, $\id[\dom(y,f)] = (y, b) \notin E$. Hence by Lemma~\ref{lem:nonlocal:rel} (and Remark~\ref{lem:nonlocal:rel}), $E$ is entangled.
\end{proof}

The point this paper emphasises is that the impossibility of secure bit commitment is not caused by the conceptual structure of quantum theory, but by the algebraic model assumed in~\cite{cliftonbubhalvorson}.

\bibliographystyle{plain}
\bibliography{cbh}

\begin{thebibliography}{10}

\bibitem{abramskycoecke:categoricalsemantics}
S.~Abramsky and B.~Coecke.
\newblock A categorical semantics of quantum protocols.
\newblock In {\em Logic in Computer Science 19}, pages 415--425. IEEE Computer
  Society, 2004.

\bibitem{barnumetal:nobroadcasting}
H.~Barnum, J.~Barrett, M.~Leifer, and A.~Wilce.
\newblock A generalized no-broadcasting theorem.
\newblock {\em Physical Review Letters}, 99:240501, 2007.

\bibitem{NoBroadcasting}
H.~Barnum, C.~M. Caves, C.~A. Fuchs, R.~Jozsa, and B.~Schumacher.
\newblock Noncommuting mixed states cannot be broadcast.
\newblock {\em Phys. Rev. Lett.}, 76:2818--2821, Apr 1996.

\bibitem{barrett:gpt}
J.~Barrett.
\newblock Information processing in generalized probabilistic theories.
\newblock {\em Physical Review A}, 75(3):032304, 2007.

\bibitem{brown2006topology}
R.~Brown.
\newblock {\em Topology and Groupoids}.
\newblock www.groupoids.org, 2006.

\bibitem{Chiri2}
G.~Chiribella, G.~M. D'Ariano, and P.~Perinotti.
\newblock Informational derivation of quantum theory.
\newblock {\em Physical Review A}, 84(1):012311, 2011.

\bibitem{choi:completelypositive}
M.-D. Choi.
\newblock Completely positive linear maps on complex matrices.
\newblock {\em Linear Algebra and its Applications}, 10:285--290, 1975.

\bibitem{cliftonbubhalvorson}
R.~Clifton, J.~Bub, and H.~Halvorson.
\newblock Characterizing quantum theory in terms of information-theoretic
  constraints.
\newblock {\em Foundations of Physics}, 33(11):1561--1591, 2003.

\bibitem{coecke2016terminality}
B.~Coecke.
\newblock Terminality implies no-signalling... and much more than that.
\newblock {\em New Generation Computing}, 34(1-2):69--85, 2016.

\bibitem{CD2}
B.~Coecke and R.~Duncan.
\newblock Interacting quantum observables: categorical algebra and
  diagrammatics.
\newblock {\em New Journal of Physics}, 13:043016, 2011.
\newblock {arXiv:quant-ph/09064725}.

\bibitem{CompQL}
B.~Coecke, C.~Heunen, and A.~Kissinger.
\newblock {\em Computation, Logic, Games, and Quantum Foundations. The Many
  Facets of Samson Abramsky: Essays Dedicated to Samson Abramsky on the
  Occasion of His 60th Birthday}, chapter Compositional Quantum Logic, pages
  21--36.
\newblock Springer Berlin Heidelberg, Berlin, Heidelberg, 2013.

\bibitem{coeckeheunenkissinger:cpstar}
B.~Coecke, C.~Heunen, and A.~Kissinger.
\newblock Categories of quantum and classical channels.
\newblock {\em Quantum Information Processing}, 2014.

\bibitem{CQMII}
B.~Coecke and A.~Kissinger.
\newblock Categorical quantum mechanics {II}: classical-quantum interaction.
\newblock ar{X}iv:1605.08617 [quant-ph], 2016.

\bibitem{PQP}
B.~Coecke and A.~Kissinger.
\newblock {\em Picturing Quantum Processes: A First Course in Quantum Theory
  and Diagrammatic Reasoning}.
\newblock Cambridge University Press, 2017.

\bibitem{coeckelalcausal}
B.~Coecke and R.~Lal.
\newblock Causal categories: relativistically interacting processes.
\newblock {\em Foundations of Physics}, 43(4):458--501, 2013.

\bibitem{coeckepaquette:naimark}
B.~Coecke and {\'E}.~O. Paquette.
\newblock {POVM}s and {N}aimark's theorem.
\newblock In {\em QPL 2006}, volume 210 of {\em Electronic Notes in Theoretical
  Computer Science}, pages 15--31, 2006.

\bibitem{cohngordon:bitcommitment}
K.~Cohn-Gordon.
\newblock Commitment algorithms.
\newblock Master's thesis, University of Oxford, 2012.

\bibitem{fritz2009possibilistic}
T.~Fritz.
\newblock {\em Possibilistic Physics}.
\newblock 2009.

\bibitem{gachechiladze:heisenberg}
M.~Gachechiladze.
\newblock On categorical characterizations of no-signaling theories.
\newblock Master's thesis, University of Oxford, 2014.

\bibitem{hardy:reconstruction}
L.~Hardy.
\newblock {\em Quantum theory: informational foundations and foils}, chapter
  Reconstructing quantum theory, pages 223--248.
\newblock Springer, 2016.

\bibitem{heunencontrerascattaneo:groupoids}
C.~Heunen, I.~Contreras, and A.~S. Cattaneo.
\newblock Relative {F}robenius algebras are groupoids.
\newblock {\em Journal of Pure and Applied Algebra}, 217(1):114--124, 2013.

\bibitem{heunenkissingerselinger:idempotents}
C.~Heunen, A.~Kissinger, and P.~Selinger.
\newblock Completely positive projections and biproducts.
\newblock In {\em QPL 2014}, volume 171 of {\em Electronic Notes in Theoretical
  Computer Science}, pages 71--83, 2014.

\bibitem{heunenvicary:cqt}
C.~Heunen and J.~Vicary.
\newblock {\em Categories for Quantum Theory: An Introduction}.
\newblock Oxford University Press, 2019.

\bibitem{hoehn2015quantum}
P.~A. Hoehn and C.~Wever.
\newblock Quantum theory from questions.
\newblock {\em arXiv preprint arXiv:1511.01130}, 2015.

\bibitem{keyl2002fundamentals}
Michael Keyl.
\newblock Fundamentals of quantum information theory.
\newblock {\em Physics reports}, 369(5):431--548, 2002.

\bibitem{maclane:categories}
S.~{Mac Lane}.
\newblock {\em Categories for the Working Mathematician}.
\newblock Springer, 2nd edition, 1971.

\bibitem{masanes2011derivation}
L.~Masanes and M.~P. M{\"u}ller.
\newblock A derivation of quantum theory from physical requirements.
\newblock {\em New Journal of Physics}, 13(6):063001, 2011.

\bibitem{masanes2013existence}
L.~Masanes, M.~P. M{\"u}ller, R.~Augusiak, and D.~P{\'e}rez-Garc{\'\i}a.
\newblock Existence of an information unit as a postulate of quantum theory.
\newblock {\em Proceedings of the National Academy of Sciences},
  110(41):16373--16377, 2013.

\bibitem{QuickThesis}
D.~Quick.
\newblock {\em {!-Logic: First Order Reasoning for Families of Non-Commutative
  String Diagrams}}.
\newblock PhD thesis, Oxford, 2015.

\bibitem{schlosshauer:enigma}
M.~Schlosshauer.
\newblock {\em Elegance and Enigma}.
\newblock Springer, 2011.

\bibitem{schumacher2012modal}
Benjamin Schumacher and Michael~D Westmoreland.
\newblock Modal quantum theory.
\newblock {\em Foundations of Physics}, 42(7):918--925, 2012.

\bibitem{SelingerCPM}
P.~Selinger.
\newblock Dagger compact closed categories and completely positive maps.
\newblock {\em Electronic Notes in Theoretical Computer Science}, 170:139--163,
  2007.

\bibitem{SelingerSurvey}
P.~Selinger.
\newblock A survey of graphical languages for monoidal categories.
\newblock In B.~Coecke, editor, {\em New Structures for Physics}, Lecture Notes
  in Physics, pages 275--337. Springer-Verlag, 2011.
\newblock {a}rXiv:0908.3347.

\bibitem{spekkens:toytheory}
R.~W. Spekkens.
\newblock Evidence for the epistemic view of quantum states: a toy theory.
\newblock {\em Physical Review A}, 75(3):032110, 2007.

\bibitem{Vicary2011}
J.~Vicary.
\newblock Categorical formulation of finite-dimensional {C*}-algebras.
\newblock {\em Electronic Notes in Theoretical Computer Science}, 270(1):129 --
  145, 2011.
\newblock Proceedings of the Joint 5th International Workshop on Quantum
  Physics and Logic and 4th Workshop on Developments in Computational Models
  (QPL/DCM 2008).

\end{thebibliography}

\end{document}